\definecolor{darkblue}{rgb}{0,0,.5}
\theoremstyle{plain}
\newtheorem{theorem}{Theorem}[section]
\newtheorem{proposition}[theorem]{Proposition}
\theoremstyle{definition}
\newtheorem{definition}[theorem]{Definition}
\newtheorem{remark}[theorem]{Remark}
\def\R{\mathbb{R}}
\def\I{\mathbb{I}}
\def\M{\mathcal{M}}
\def\F{\mathcal{F}}
\def\A{\mathcal{A}}
\def\W{\mathcal{W}}
\newcommand{\hor}{\mbox{$\text{\up{Hor}}$}}
\newcommand{\hpr}{\mbox{$\text{\up{hpr}}$}}
\newcommand{\ver}{\mbox{$\text{\up{Ver}}$}}
\newcommand{\vpr}{\mbox{$\text{\up{vpr}}$}}
\newcommand{\hl}{\mbox{$\text{\up{hl}}$}}
\newcommand{\hlA}{\mbox{$\textup{hl}^{\mathcal{A}_0}$}}
\newcommand{\vl}{\mbox{$\text{\up{vl}}$}}
\newcommand{\CurvA}{\mbox{$\textup{Curv}^{\mathcal{A}_0}$}}
\newcommand{\Curv}{\mbox{$\textup{Curv}$}}
\newcommand{\up}{\upshape}
\newcommand{\hookto}{\hookrightarrow}
\def\vv<#1>{\langle#1\rangle}
\def\ww<#1>{\langle\langle#1\rangle\rangle}
\newcommand{\id}{\mbox{$\text{\up{id}}\,$}}
\newcommand{\pr}{\mbox{$\text{\up{pr}}$}}
\newcommand{\dd}[2]{\mbox{$\frac{\partial #2}{\partial #1}$}}
\newcommand{\om}{\omega}
\newcommand{\Om}{\Omega}
\newcommand{\var}{\varphi}
\newcommand{\lam}{\lambda}
\newcommand{\Lam}{\Lambda}
\newcommand{\by}[2]{\mbox{$\frac{#1}{#2}$}}
\providecommand{\set}[1]{\mbox{$\{#1\}$}}
\newcommand{\ao}{\mathfrak{a}}
\newcommand{\doo}{\mathfrak{d}}
\newcommand{\gu}{\mathfrak{g}}
\newcommand{\ko}{\mathfrak{k}}
\newcommand{\mo}{\mathfrak{m}}
\newcommand{\po}{\mathfrak{p}}
\newcommand{\Ad}{\mbox{$\text{\upshape{Ad}}$}}
\newcommand{\ad}{\mbox{$\text{\upshape{ad}}$}}
\newcommand{\SO}{\mbox{$\textup{SO}$}}
\newcommand{\so}{\mbox{$\mathfrak{so}$}}
\newcommand{\X}{\mbox{$\mathcal{X}$}}
\newcommand{\Aut}{\mbox{$\textup{Aut}$}}
\newcommand{\Diff}{\mbox{$\textup{Diff}$}}
\title[Symmetry actuated  Hamiltonian systems]{%
Symmetry actuated closed-loop Hamiltonian systems}
\author{Simon Hochgerner}
\address{\"Osterreichische Finanzmarktaufsicht (FMA),
Otto-Wagner Platz 5, A-1090 Vienna
}
\email{simon.hochgerner@fma.gv.at} 
\begin{document}

\begin{abstract}
This paper extends the theory of controlled Hamiltonian systems with symmetries due to \cite{Kri85,BKMS92,BMS97, BLM01a, BLM01b, CBLMW02} to the case of non-abelian symmetry groups $G$ and semi-direct product configuration spaces. The notion of symmetry actuating forces is introduced and it is shown, that Hamiltonian systems subject to such forces permit a conservation law, which arises as a controlled  perturbation of the $G$-momentum map.  Necessary and sufficient matching conditions are given to relate the closed-loop dynamics, associated to the forced Hamiltonian system, to an unforced Hamiltonian system. These matching conditions are then applied to general Lie-Poisson systems, to the example of ideal charged fluids in the presence of an external magnetic field (\cite{H19}), and to the satellite with a rotor example (\cite{BKMS92,BMS97}).
\end{abstract}

\maketitle


\section*{Introduction} 
\subsection*{Feedback control of Hamiltonian systems with symmetries}
The method of controlled Lagrangian and Hamiltonian systems has started with \cite{Kri85,BKMS92} and then been further developed in \cite{BLM01a, BLM01b, CBLMW02, BMS97, PB19}. Reviews are contained in \cite{BL02} and \cite{MarsdenPCL}. In \cite{CBLMW02} it is shown, for general systems and without explicitly considering symmetries, that the methods of controlled Lagrangian and Hamiltonian systems are equivalent.

The prototypical example with regard to this method is the satellite with a rotor. In this example the satellite is modelled as a rigid body and a rotor is attached to the third principal axis. If the rotor is turned sufficiently fast, the satellite becomes stable for rotations about the middle axis. Since the middle axis is otherwise an unstable equilibrium of the rigid body, this means that the rotor control can be used to stabilize the system. What ``sufficiently fast'' means has been worked out by \cite{BKMS92}. Their construction is based on the observation that the rotor control depends on the satellite body angular momentum, which leads to a feedback system, and this feedback system can be reformulated as a Hamiltonian system. Thus, the (non-linear) stability analysis of \cite{A66,HMRW85} can be used to find a stabilizing control. 

In \cite{BLM01a, BLM01b, CBLMW02} this approach has been systematized in the following manner, both for Lagrangian and Hamiltonian systems: 
The authors of loc.\ cit.\ start with a mechanical system which is assumed to be of Kaluza-Klein type on a configuration space $P$. That is, $\pi: P\to M$ is a (finite-dimensional) principal fiber bundle with structure group $G$; $(\mu_0^M,\I_0,\A_0)$ is a set of Kaluza-Klein data consisting of metric, inertia tensor and connection form, 
such that the metric $\mu_0^P$ on $P$ is given by 
$(\mu_0^P)_{\om}(\xi,\zeta) 
= 
(\mu_0^M)_{\pi(\om)}(T_{\om}\pi.\xi,T_{\om}\pi.\zeta) 
+
(\I_0)_{\om}(\A_0(\om).\xi,\A_0(\om).\zeta)$; the Hamiltonian is of the form $H_0(\Pi) = \by{1}{2}\vv<\Pi,(\mu_0^P)^{-1}\Pi> + V(\om)$ where $\om = \tau(\Pi)$ is the base point and $V: P\to\R$ is a $G$-invariant potential. Now, the Kaluza-Klein data can be perturbed to obtain a controlled set $(\mu_C^M,\I_C,\A_C)$, and thus a controlled Hamiltonian $H_C$ (with the same potential $V$). 
In loc.\ cit.\ it is shown how, under certain conditions, the equations of motion corresponding to $H_C$ are equivalent to the closed-loop dynamics corresponding to $H_0$ subject to a feedback control force.

Further, it is shown in \cite{BLM01a, BLM01b, BL02, BKMS92, BMS97} that this force is constructed so that it acts only along the symmetry directions of $G$ -- e.g., \cite[Equ.~(2.6)]{BLM01a}. 

In order for this construction to work, \cite{BLM01a, BLM01b, BL02, BKMS92, BMS97, PB19} make two types of  assumptions:
\begin{enumerate}
    \item 
    A set of matching assumptions.
    \item
    The symmetry group $G$ is abelian. 
\end{enumerate}
The matching assumptions ensure that the Hamiltonian equations of motion for $H_C$ are equivalent (can be ``matched'') to the desired closed loop equations. The second assumption seems to be of technical nature. Further, \cite{BLM01b} contains an Euler-Poincar\'e version of these results, where it is assumed that $P=M\times G$ is a direct product of Lie groups $M$  and $G$.

\subsection*{Description of results}
In this paper the matching approach is reversed: The starting point is not the perturbation of the set of Kaluza-Klein data, but the control force. Thus, given a $G$-invariant Hamiltonian $H$ and a $G$-equivariant fiber-linear map $j: T^*P\to\gu^*$, we specify in Definition~\ref{def:SymCloLoop} the \emph{symmetry actuating force} $F$. This force acts only along symmetry directions in the sense (Definition~\ref{def:admissible}) that it is a vector field, which is vertical for the \emph{control connection} $\Gamma$ introduced in Section~\ref{sec:Gamma}. In the satellite example the control connection corresponds to the satellite variables, so that work is only done in the rotor direction. 

The equation of motion associated to the Hamiltonian vector field $X_{H}$ and the feedback force $F$ is 
\begin{equation}
    \tag{\ref{e:Feom1}}
    \dot{\Pi}_t
    = X_{H}(\Pi_t) + F(\Pi_t)
\end{equation}
and in Theorem~\ref{thm:ConsLaw} it is shown that a controlled conservation law holds:
\[
 (J+j)(\Pi_t)
\]
is constant in $t$ where $J: T^*P\to\gu^*$ is the momentum map associated to the cotangent lifted action of $G$.  
This constant of motion provides the link to the closed-loop equations of motion by eliminating the $G$-dependence. 

The crucial step in the construction of $F$ is now prescription~\eqref{e:B-j-tilde}, which determines $j$ by means of a $G$-equivariant fiber-linear map $\tilde{\jmath}: \hor^* = J^{-1}(0)\to\gu^*$. This prescription is such that the force depends on the full $\Pi$-dynamics, and not only on the shape directions, which correspond through the choice of the connection $\Gamma$ to positions and momenta of the dynamics in the shape space $M$. (However, actuation is still only on symmetry directions.) For the satellite example, this means that the force depends on the satellite and rotor variables, and not only on satellite observations. This approach is necessary in order to deal with non-abelian symmetry groups $G$ or semi-direct products $M\circledS G$. Also, this approach is different from \cite{BLM01a, BLM01b, BL02, BKMS92, BMS97, PB19} where the perturbation $(\mu_C^M,\I_C,\A_C)$ is constructed so that the corresponding force only depends on the shape directions. 
See also the discussion in Sections~\ref{sec:CON-B=1} and \ref{sec:CON-COMP}. 
Thus, while starting from the force $F$ is in principle equivalent to starting from the perturbed data $(\mu_C^M,\I_C,\A_C)$, the former has, in conjunction with Theorem~\ref{thm:ConsLaw}, the advantage of clarifying the relation between $(\mu_C^M,\I_C,\A_C)$ and the closed-loop equations.  Compare Remarks~\ref{rem:SATrem2} and \ref{rem:SATrem3}.

Theorem~\ref{thm:MC} provides a matching result,  i.e.\ equivalence of closed-loop and Hamiltonian $H_C$-dynamics, under very general assumptions. In particular, $G$ is allowed to be non-abelian.  
This matching theorem is necessary and sufficient, and provides explicit formulas for $(\mu_C^M,\I_C,\A_C)$.  But there is also a fourth condition \eqref{e:thm_MC_2}, which is not explicit. In Theorem~\ref{thm:SD-matching}, this fourth condition is made concrete for the case of a semi-direct product $M\circledS G$ and a right invariant Hamiltonian $H_0$. This condition is
\begin{equation}
    \tag{\ref{e:SD-MCond}}
    A_0u\diamond (1+CA_0^*)^{-1}C\nu 
    = 0
\end{equation}
which depends on the semi-direct product structure. 
Here $A_0: \mo\to\gu$ is the downstairs connection form corresponding to $\A_0$ defined \eqref{e:SD-MC}, and the linear map $C: \mo^*\to\gu^*$, introduced in \eqref{e:SD-C}, determines the fiber-linear map $\tilde{\jmath}: \hor^*\to\gu^*$.
As a corollary, it follows that matching is always possible for direct product Lie groups $M\times G$. 

Formulas~\eqref{e:SD-mu_C^M} and \eqref{e:SD-A_C} in Theorem~\ref{thm:SD-matching} have also been found in \cite{H19}. But now these formulas follow from  a general construction and not from an ad-hoc analysis of the closed-loop equations and the feedback force. In particular, Theorem~\ref{thm:MC} implies that this form of the controlled data is not only sufficient for matching, but also necessary.  Moreover, the construction is now no longer constrained to the realm of Lie-Poisson equations, but can be applied to general Hamiltonian systems with symmetries. When these systems are sufficiently explicit, it may be hoped that concrete matching conditions along the lines of Theorem~\ref{thm:SD-matching} can be found. See Sections~\ref{sec:CON-EMF} (electromagnetic flow control) and \ref{sec:CON-SHC} (stochastic Hamiltonian systems) for possible future directions.  

To illustrate these results, two examples are treated: ideal flow of charged particles subject to an external magnetic field (\cite{H19}) and the satellite with a rotor (\cite{BKMS92}). These examples are not new, but can now be treated both within the same framework and without any ad-hoc assumptions. 
This is also in contrast to \cite{H19}, where these examples were considered together, but yet with respect to different feedback forces and these forces were found through an ad-hoc analysis of the desired closed loop equations. 

Thus this paper is an attempt to synthesize the symmetry control approach of \cite{BLM01a}, restricted to abelian $G$, and the results of \cite{H19}, which are restricted to the Lie-Poisson case. 
Comparing the general matching conditions of \cite{BLM01a} and Theorem~\ref{thm:MC}, it should be mentioned that the conditions $M2$ and $M3$ in Theorem~2.2 of \cite{BLM01a} are more explicit than \eqref{e:thm_MC_2} for the case of base-point dependent metric $\mu_0^M$ and connection $\A_0$. On the other hand, \cite{BLM01a} assume $G$ to be abelian and also that $\mu_C^M$ satisfies two additional conditions (in the paragraph below \cite[Equ.~(2.1)]{BLM01a}), which are not necessary for Theorem~\ref{thm:MC}. Equation~\eqref{e:A_C} for the controlled connection $\A_C$ corresponds to assumption $M1$ in \cite{BLM01a}.

While the method of controlled Lagrangian or Hamiltonian systems was  developed with the aim of stabilizing a given (unstable) equilibrium, stability is not the subject of this paper and only touched upon very briefly in  Section~\ref{sec:SD-equil}. 

Further, we do not consider a second method of controlling mechanical systems, namely that of potential shaping (\cite{BCLM01}).

\subsection*{Structure of the paper}
Section~\ref{sec:SYMACT} defines the notion of a symmetry actuating force, shows that the resulting systems belong to the class of closed-loop Hamiltonian systems in the sense of \cite{CBLMW02} and provides the controlled Noether Theorem~\ref{thm:ConsLaw}. 

Section~\ref{sec:bun-pic} is mostly technical and contains many of the formulas, particularly in Propositions~\ref{prop:X^j} and \ref{prop:Xtilde}, which are subsequently used. Further, the form of the control map $j: T^*P\to\gu^*$ is found in \eqref{e:B-j-tilde}. 
This section could be shortened by assuming \eqref{e:B-j-tilde} from the beginning, but the general form of the equations is used in the discussion in Section~\ref{sec:CON-B=1}. 

Section~\ref{sec:match} contains the general matching result Theorem~\ref{thm:MC}. 

Section~\ref{sec:SD} applies Theorem~\ref{thm:MC} to the case of a semi-direct product $M\circledS G$ and a right-invariant Kaluza-Klein Hamiltonian $H_0$. The main result in this section is Theorem~\ref{thm:SD-matching}. In Section~\ref{sec:SD-equil} there are some comments on the suitability of the constructed force $F$ with regard to stabilizing a given (unstable) equilibrium.  

Section~\ref{sec:SAT} puts the satellite example in the context of Theorem~\ref{thm:SD-matching}. 

Section~\ref{sec:YM} does the same for the charged fluid example.

Section~\ref{sec:CON} contains conclusions and possible future directions. 

Section~\ref{sec:app} is an appendix and collects formulas which are used mainly in Section~\ref{sec:bun-pic}. The global formula~\eqref{app:OmK} for the canonical symplectic form on the cotangent bundle of a Riemannian manifold seems to be not very well known. (I could not find it in the literature.)

\section{Symmetry actuation}\label{sec:SYMACT}

\subsection{The mechanical system}\label{sec:SYM-MechSyst}
Let $\pi: P\to M$ be a finite dimensional principal bundle over $M$ with structure group $G$ and principal right action denoted by $r$. 


Consider the phase space $T^*P$ together with the cotangent lifted $G$-action and the canonical symplectic form $\Om^{T^*P} = -d\theta^{T^*P}$, which is $G$-invariant. The standard $G$-equivariant momentum map is denoted by $J : T^*P\to\gu^*$. 
Let $H: T^*P\to\R$ be a $G$-invariant Hamiltonian. 
It follows that $dJ.X_H = 0$.

\subsection{The control connection}\label{sec:Gamma}
There are two subbundles, namely the vertical $\ver$ and dual horizontal $\hor^*$, which are canonically associated to the principal fiber bundle $\pi: P\to M$. Explicitly , these are 
\[
 \ver=\ker T\pi\subset TP
 \quad\textup{and}\quad
 \hor^* = \textup{Ann}(\ver)\subset T^*P
\]
where the annihilator is defined as $\textup{Ann}(\ver) = \{\Pi \in T^*P: \Pi(X) = 0 \;\forall X\in\ver\}$. 
Let $\Gamma\in\Om^1(P,\gu)$ denote a connection form on $\pi: P\to M$. 
Define 
\begin{equation}
    \hor_{\Gamma} = \ker\Gamma
    \quad\textup{and}\quad
    \ver_{\Gamma}^* = \textup{Ann}(\hor_{\Gamma}).
\end{equation}
The corresponding projections are 
\begin{equation}\label{e:hpr_Gamma}
    (\hpr_{\Gamma},\vpr_{\Gamma}): TP \to \hor_{\Gamma}\oplus\ver,\quad
    (\hpr_{\Gamma}^*,\vpr_{\Gamma}^*): T^*P\to \hor^*\oplus\ver_{\Gamma}^*,
\end{equation}
and these are connection dependent isomorphisms. 
We refer to $\Gamma$ as the \emph{control connection}. 

\begin{remark} 
The purpose of $\Gamma$ is to single out the directions along which the control force $F$ acts, respectively does not act. See Definition~\ref{def:admissible}. For instance, in the satellite example, the force acts on the rotor, but not on the satellite itself. This distinction cannot be made by the mechanical connection, but necessitates another splitting of the phase space $T^*P$. 
The control connection $\Gamma$ is not to be confused with the controlled connection $\A_C$ introduced below.
\end{remark}

\subsection{Symmetry actuation and conserved quantities}
The isomorphism \eqref{e:hpr_Gamma} lifts to the tangent bundle
\begin{equation}
    T(\hpr_{\Gamma}^*,\vpr_{\Gamma}^*): 
    TT^*P\to T\hor^*\oplus T\ver_{\Gamma}^*. 
\end{equation}
Let $\tau_{TT^*P}: TT^*P\to T^*P$ be the tangent projection and note that there is an isomorphism
\[
 (\tau_{TT^*P},T\vpr_{\Gamma}^*)|\ker T\hpr_{\Gamma}^*: \ker T\hpr_{\Gamma}^*\to T^*P\times_{\textup{Ver}_{\Gamma}^*}T\ver_{\Gamma}^*
\]
of fiber bundles over $T^*P$. For $\Pi\in T^*P$ we will write this isomorphism simply as $T_{\Pi}\vpr_{\Gamma}^*: \ker T_{\Pi}\hpr_{\Gamma}^* \to T_{\textup{vpr}_{\Gamma}^*(\Pi)}\ver_{\Gamma}^*$.

\begin{definition}[Symmetry actuating force]\label{def:admissible}
We say that a vector field $F\in\X(T^*P)$ is a \emph{symmetry actuating force}, if it satisfies the following three conditions:
\begin{enumerate}
    \item 
    It is vertical with respect to the base point projection $\tau: T^*P\to P$, that is: $T\tau.F = 0$.
    \item
    It does not act along $\Gamma$-horizontal directions, that is: $T\hpr_{\Gamma}^*.F = 0$.
    \item
    It is $G$-invariant with respect to the cotangent lifted $r$-action on $T^*P$. 
\end{enumerate}
\end{definition}

\begin{proposition}\label{prop:force}
Let $\Gamma$ be a control connection. 
\begin{enumerate}
\item 
    There is a one-to-one correspondence between symmetry actuating forces $F\in\X(T^*P)$ and $G$-equivariant functions $f \in \F_G(T^*P,\gu^*)$: \\
    Given $F$: 
\begin{equation}
        f = f_F := dJ.F 
\end{equation}
    Given $f$:
\begin{equation}\label{e:f_to_F}
    F(\Pi)
    =
    F_f(\Pi)
    :=
    \Big(T_{\Pi}\vpr_{\Gamma}^*|\ker T_{\Pi}\hpr_{\Gamma}^*\Big)^{-1}
    \Big(T_{\textup{vpr}_{\Gamma}^*(\Pi)}(\tau,J)|\ver_{\Gamma}^*\Big)^{-1}
    \Big(0_{\om}; J(\Pi), f(\Pi)\Big) 
\end{equation}
where $\om = \tau(\Pi)$ and $0_{\om}$ is the $0$-section at $\om$ in $TP$.
\item
Let $f\in\F_G(T^*P,\gu^*)$
and define
$\mathcal{U}(\Pi) 
:= ((\tau,J)|\ver_{\Gamma}^*)^{-1} (\tau(\Pi), f(\Pi) ) \in \ver_{\Gamma}^*$. 
If $F = F_f$, then
\begin{equation}
    \label{e:CLU}
    F(\Pi) 
    = \textup{vl}^*(\Pi,\mathcal{U}(\Pi))
    = \dd{t}{}|_0(\Pi + t\,\mathcal{U}(\Pi))
\end{equation}
where $\textup{vl}^*: T^*P\oplus T^*P\to\ker T\tau\subset TT^*P$ is the vertical lift map.
\item
Let $F\in\X(T^*P)$ be a symmetry actuating force and consider  
\begin{equation}
\label{e:eomF}
    \dot{\Pi}_t = X_{H}(\Pi_t) + F(\Pi_t).
\end{equation}
Then
\begin{equation}\label{e:actintsym}
    \dd{t}{}J(\Pi_t) = f(\Pi_t)
\end{equation}
where $f = dJ.F$. 
\end{enumerate}
\end{proposition}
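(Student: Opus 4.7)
The plan is to prove part (2) first, since it gives an explicit vertical-lift description from which parts (1) and (3) follow with minimal further work.

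\textbf{Preliminary step.} The crucial auxiliary fact is that $(\tau,J)|\ver_{\Gamma}^*$ is a vector bundle isomorphism $\ver_{\Gamma}^*\to P\times\gu^*$. At $\om\in P$, freeness of the principal action identifies $\gu$ with $\ver_{\om}$ via fundamental vector fields $\zeta\mapsto\zeta_P(\om)$; restriction of covectors from $T_{\om}^*P$ to $\ver_{\om}$ is an isomorphism $\ver_{\Gamma,\om}^*\to\ver_{\om}^*$ because $\hor_{\Gamma,\om}$ complements $\ver_{\om}$. The composition of these is precisely $J|\ver_{\Gamma,\om}^*$, since $J(\alpha)(\zeta)=\alpha(\zeta_P(\om))$. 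Note also that $J(\Pi)=J(\vpr_{\Gamma}^*(\Pi))$, because $\hpr_{\Gamma}^*(\Pi)\in\hor^*=\textup{Ann}(\ver)$, so the basepoint $J(\Pi)$ appearing in \eqref{e:f_to_F} is consistent.

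\textbf{Part (2).} Condition (1) of Definition~\ref{def:admissible} says $F(\Pi)\in\ker T_{\Pi}\tau$, so $F(\Pi)=\textup{vl}^*(\Pi,\mathcal{U}(\Pi))$ for a unique covector $\mathcal{U}(\Pi)\in T^*_{\tau(\Pi)}P$. By fiber-linearity of $\hpr_{\Gamma}^*$,
\begin{equation*}
T\hpr_{\Gamma}^*\cdot F(\Pi)
=\tfrac{d}{dt}\big|_0\bigl(\hpr_{\Gamma}^*(\Pi)+t\,\hpr_{\Gamma}^*(\mathcal{U}(\Pi))\bigr)
=\textup{vl}^*\bigl(\hpr_{\Gamma}^*(\Pi),\hpr_{\Gamma}^*(\mathcal{U}(\Pi))\bigr),
\end{equation*}
so condition (2) forces $\hpr_{\Gamma}^*(\mathcal{U}(\Pi))=0$, i.e.\ $\mathcal{U}(\Pi)\in\ver_{\Gamma,\tau(\Pi)}^*$. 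Similarly, fiber-linearity of $J$ gives $dJ_{\Pi}\cdot\textup{vl}^*(\Pi,\mathcal{U}(\Pi))=J(\mathcal{U}(\Pi))$, so $f=dJ.F$ becomes $J(\mathcal{U}(\Pi))=f(\Pi)$. Combined with $\tau(\mathcal{U}(\Pi))=\tau(\Pi)$, the preliminary step uniquely determines $\mathcal{U}(\Pi)=((\tau,J)|\ver_{\Gamma}^*)^{-1}(\tau(\Pi),f(\Pi))$, which is \eqref{e:CLU}. Formula~\eqref{e:f_to_F} is just this same identification rewritten at the tangent-bundle level, with the inverses of the isomorphisms $T\vpr_{\Gamma}^*|\ker T\hpr_{\Gamma}^*$ and $T(\tau,J)|\ver_{\Gamma}^*$ implementing the two selection conditions on $\mathcal{U}(\Pi)$.

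\textbf{Parts (1) and (3).} For (1), the mutual inverses $F\mapsto dJ.F$ and $f\mapsto F_f$ follow from the uniqueness just established. The equivariance of $dJ.F$ is clear from $G$-equivariance of $J$ and $G$-invariance of $F$; conversely, since $J$, $\Gamma$ and $\tau$ are $G$-equivariant, an equivariant $f$ produces an equivariant $\mathcal{U}$, and hence a $G$-invariant vertical lift $F_f$, for which the first two conditions of Definition~\ref{def:admissible} are immediate from its form and from $\mathcal{U}(\Pi)\in\ver_{\Gamma}^*$. For (3), $G$-invariance of $H$ yields Noether's identity $dJ.X_H=0$, so along a solution of \eqref{e:eomF},
\begin{equation*}
\tfrac{d}{dt}J(\Pi_t)=dJ_{\Pi_t}.X_H(\Pi_t)+dJ_{\Pi_t}.F(\Pi_t)=0+f(\Pi_t).
\end{equation*}
The only non-routine step is the preliminary identification $\ver_{\Gamma}^*\cong P\times\gu^*$ via $(\tau,J)$; once this is in hand, everything else reduces to tracking vertical lifts, fiber-linearity on $T^*P$, and Noether's theorem.
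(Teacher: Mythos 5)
Your proof is correct and uses essentially the same ingredients as the paper's: the isomorphism $(\tau,J)|\ver_{\Gamma}^*:\ver_{\Gamma}^*\to P\times\gu^*$ (which you justify in slightly more detail via $J^{-1}(0)=\hor^*$ and the fundamental-vector-field identification), fiber-linearity of $J$ and $\hpr_{\Gamma}^*$ to reduce tangent maps to vertical lifts, and Noether's identity $dJ.X_H=0$ for part (3). The only difference is organizational --- you establish the vertical-lift characterization \eqref{e:CLU} first and deduce \eqref{e:f_to_F} and the bijection from the uniqueness of $\mathcal{U}(\Pi)$, whereas the paper derives \eqref{e:f_to_F} directly and then unwinds it into \eqref{e:CLU} --- which is a clean and equally valid arrangement.
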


\begin{proof}
Assertion~(1).
Let $F$ be $G$-actuating.
Then
$f = f_F = dJ.F\in \F_G(T^*P,\gu^*)$ because $F$ is assumed to be $G$-invariant.

Since $J^{-1}(0) = \hor^*$, it follows that $(\tau,J)|\ver_{\Gamma}^*: \ver_{\Gamma}^*\to P\times\gu^*$ is an isomorphism. 
Now, $F\in \ker T\hpr_{\Gamma}^*\cap \ker T\tau$ implies
\[
 T_{\Pi}\vpr_{\Gamma}^*.F(\Pi)
  = 
  \Big(T_{\textup{vpr}_{\Gamma}^*\Pi}(\tau,J)|\ver_{\Gamma}^*\Big)^{-1}
  \Big(0_{\om}; J(\Pi), dJ.F(\Pi)\Big) 
\]
where $0_{\om}\in TP$ is the $0$-section at $\om = \tau(\Pi)$.  
Therefore, given $f\in \F_G(T^*P,\gu^*)$ we may define 
$F = F_f$ according to \eqref{e:f_to_F}.

Assertion~(2).
Let $\Pi\in T^*P$ and $\om = \tau(\Pi)$.
Then
\begin{align*}
    T_{\Pi}\vpr_{\Gamma}^*.F(\Pi)
    &= 
    \Big(
        T(\tau,J)|\ver_{\Gamma}^*
    \Big)^{-1}(0_{\om};J(\Pi),f(\Pi))
    \\
    &=
    \dd{t}{}|_0
    \Big(
        (\tau,J)|\ver_{\Gamma}^*
    \Big)^{-1}
    (\om, J(\Pi) + t f(\Pi)) \\
    &=
    \dd{t}{}|_0
    \Big(
    \Big(
        (\tau,J)|\ver_{\Gamma}^*
    \Big)^{-1}
    (\om, J(\Pi) )
    +
    t
    \Big(
    (\tau,J)|\ver_{\Gamma}^*
    \Big)^{-1}
    (\om,  f(\Pi))
    \Big)\\
    &= 
    \textup{vl}^*(\vpr_{\Gamma}^*(\Pi), \mathcal{U} (\Pi))
\end{align*}
where we use hat $J$ is linear in the fiber. Since 
$\textup{vl}^*(\Pi, \mathcal{U} (\Pi)) \in \ker T_{\Pi}\hpr_{\Gamma}^*$, the claim follows. 

Assertion~(3) follows from $dJ.X_{H}=0$.
\end{proof}

\begin{remark}
Equation~\eqref{e:CLU} means that \eqref{e:eomF} is a closed-loop Hamiltonian system in the sense of \cite{CBLMW02} with control subbundle $\ver_{\Gamma}^*\subset T^*P$ and control $\mathcal{U}: T^*P\to\ver_{\Gamma}^*$. 
(See the paragraph below Definition~3.1 in loc.~cit.)
\end{remark}

\begin{definition}[Symmetry actuated closed-loop Hamiltonian system]
\label{def:SymCloLoop}
Let $\pi: P\to M$, $G$, $\Gamma$ and $H$ as above.  
A \emph{symmetry actuated closed-loop Hamiltonian system} on $T^*P$ 
is defined by the equation of motion 
\begin{equation}
\label{e:Feom1}
     \dot{\Pi}_t = X_{H}(\Pi_t) + F(\Pi_t).
\end{equation}
where $F=F_f$ is a
symmetry actuating force determined by \eqref{e:f_to_F}, and where 
\begin{equation}
\label{e:defSymAct}
     f(\Pi) = -B_{\om}.dj.X_{H}(\Pi)
\end{equation}
with $\om = \pi(\Pi)$ and:
\begin{enumerate}
    \item
    $B: P\times\gu^*\to \gu^*$, 
    $(\om,p)\mapsto B_{\om}p$ is $G$-equivariant and $B_{\om}$ is an isomorphism for all $\om\in P$. That is, $\Ad(g)^*B_{\om} = B_{\om g}\circ\Ad(g)^*$. 
    \item
    $j$ is a $G$-equivariant map of the  form
    \[
     j = 
     B^{-1}\circ(\tau,\tilde{\jmath})\circ\hpr_{\Gamma}^*
     +
     (B^{-1}-\pr_2)\circ (\tau,J)
     : T^*P\to\gu^*
    \]
    for a fiber-wise linear and $G$-equivariant map $\tilde{\jmath}: \hor^*\to\gu^*$,
    and
    where $B^{-1}$ is the fiber-wise inverse of $B$. 
    That is, 
    $j(\Pi) 
    = 
    B_{\om}^{-1}.\tilde{\jmath}(\hpr_{\Gamma}^*(\Pi)) + (B_{\om}^{-1}-1).J(\Pi)$ where $\om = \tau(\Pi)$.
\end{enumerate}
\end{definition}

\begin{theorem}[Conservation law]\label{thm:ConsLaw}
If $\Pi_t$ is a solution of \eqref{e:Feom1}, 
then $(J + j)(\Pi_t)$ is constant in $t$.
\end{theorem}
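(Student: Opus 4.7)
The plan is to differentiate $(J+j)(\Pi_t)$ along the flow \eqref{e:Feom1} and verify that the contributions of $X_{H}$ and $F$ cancel exactly, thanks to the specific choice of $f$ dictated by \eqref{e:defSymAct}. The $J$-piece is essentially for free: by Proposition~\ref{prop:force}(3), $\tfrac{d}{dt}J(\Pi_t) = f(\Pi_t)$, since $dJ\cdot X_{H} = 0$ by $G$-invariance of $H$ and $dJ\cdot F = f$ by construction.

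The main work is to compute $dj\cdot F$ in closed form; once this is in hand, everything collapses algebraically. By Definition~\ref{def:admissible} the force $F$ is $\tau$-vertical, and by \eqref{e:CLU} it is infinitesimally a fiber translation $\Pi\mapsto \Pi + t\mathcal{U}(\Pi)$ with $\mathcal{U}(\Pi)\in\ver_{\Gamma}^* = \ker\hpr_{\Gamma}^*$ and $J(\mathcal{U}(\Pi))=f(\Pi)$. Substituting this into the prescription
\[
 j(\Pi) = B_{\om}^{-1}\tilde{\jmath}(\hpr_{\Gamma}^*(\Pi)) + (B_{\om}^{-1}-1)J(\Pi)
\]
from Definition~\ref{def:SymCloLoop}(2), the first summand is invariant along the translation (since $\hpr_{\Gamma}^*\mathcal{U}=0$ and the base point $\om$ is fixed by $\tau$-verticality), while the second summand is affine in the fiber with slope $f(\Pi)$. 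This should yield the clean identity
\[
 dj\cdot F(\Pi) \;=\; (B_{\om}^{-1}-1)\,f(\Pi).
\]

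Putting the pieces together and substituting $f=-B\cdot dj\cdot X_{H}$ from \eqref{e:defSymAct} gives
\[
 \tfrac{d}{dt}(J+j)(\Pi_t) = f + dj\cdot X_{H} + (B^{-1}-1)f = B^{-1}f + dj\cdot X_{H} = 0,
\]
as required. I expect the only step requiring care to be the computation of $dj\cdot F$. The particular form of $j$ prescribed in Definition~\ref{def:SymCloLoop}(2) -- in particular the apparently artificial summand $(B^{-1}-\pr_2)\circ(\tau,J)$ -- is evidently engineered so that the factor $B^{-1}$ surfaces at exactly the right moment to annihilate $dj\cdot X_{H}$, and this is the conceptual content behind the otherwise purely formal calculation.
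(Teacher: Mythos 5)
Your proposal is correct and follows essentially the same route as the paper: the paper likewise reduces everything to the identity $dj\cdot F = (B^{-1}-1)\,dJ\cdot F = (B^{-1}-1)f$ (obtained there via the chain rule on the two summands of $j$ and fiber-wise linearity, where you instead use the vertical-lift form \eqref{e:CLU} of $F$ together with fiber-linearity of $J$), and then the same cancellation with $f=-B\cdot dj\cdot X_{H}$. No gaps.
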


\begin{proof}
Equation~\eqref{e:defSymAct} and Proposition~\ref{prop:force} combine to yield
\begin{equation}
\label{e:j-force-H_0}
    dJ.F = -B.dj.X_{H}.
\end{equation}
Further, 
\begin{align*}
    dj.F(\Pi)
    &=
    d(B^{-1}).(
        \underbrace{T\tau.F(\Pi)}_{0_{\om}},
        \underbrace{T(\tilde{\jmath}.\textup{hpr}_{\Gamma}^*).F(\Pi)}_{(\tilde{\jmath}.\textup{hpr}_{\Gamma}^*(\Pi),0)}
        )
    + d(B^{-1}-1).(\underbrace{T\tau.F(\Pi)}_{0_{\om}}, TJ.F(\Pi)) \\
    &=
    (B^{-1}-1).dJ.F(\Pi)
    =
    - (B^{-1}-1)B.dj.X_H(\Pi)
\end{align*}
where $0_{\om}\in T_{\om}P$ is the zero-section at $\om=\tau(\Pi)$ and we use fiber-wise linearity of $B$.  
Due to $dJ.X_{H} = 0$, by $G$-invariance of $H$, we obtain
\begin{align}
    \label{e:ConsLaw}
    (dJ+dj).(X_{H}+F)
    &= 
    dj.X_H
    - B.dj.X_H
    - (B^{-1}-1)B.dj.X_H
    = 0.
\end{align}
\end{proof}

\begin{remark}
Equation~\eqref{e:actintsym} means that the force acts along internal symmetry directions. At the same time, the condition $T\hpr_{\Gamma}^*.F = 0$ ensures that the force does not act in  $\hor^*$-directions, corresponding to the shape space $M$. Compare with \cite[Equ.~(1.10)]{BMS97}
or \cite[Equ.~(2.6)]{BLM01a}.
\end{remark}

\section{The bundle picture of symmetry actuated closed-loop Hamiltonian systems}
\label{sec:bun-pic}
Let $\pi: P\to M$, $G$,  $\Gamma$ and $j$ as above.
Let $\mu_0^{M}$ be a metric on $M$ and $\I_0$ a $G$-invariant locked inertia tensor on $P\times\gu$.

Fix a connection form $\A_0 \in \Om^1(P,\gu)$. 
The horizontal, vertical and respective dual spaces associated to $\A_0$ are denoted by 
\begin{equation}
    \hor_0 = \ker\A_0,\;
    \ver = \ker T\pi,\;
    \hor^{*} = \textup{Ann}(\ver),\;
    \ver_0^* = \textup{Ann}(\hor_0),
\end{equation}
where only $\hor_0$ and $\ver_0^*$ depend on the choice of connection. The corresponding projections are 
\begin{equation}
    (\hpr_0,\vpr_0): TP \to \hor_0\oplus\ver,\quad
    (\hpr_0^*,\vpr_0^*): T^*P\to \hor^*\oplus\ver_0^*,
\end{equation}
and these depend on the chosen connection $\A_0$. 

The associated Kaluza-Klein metric on $P$ is denoted by 
\[
 \mu^{P}_0 = \mu^{KK}(\mu_0^{M},\I_0,\A_0)
\]
with associated kinetic energy Hamiltonian 
\[
 H_0: 
 T^*P\to\R, 
 \Pi\mapsto \by{1}{2}\vv<\Pi, (\mu^{P}_0)^{-1}\Pi> + V\circ\tau
\] 
where $V\circ\tau: T^*P\to\R$ is a $G$-invariant potential. 
The force associated to $H_0$ via \eqref{e:defSymAct} is called $F_0 = F_f$ and  we consider, analogously to \eqref{e:Feom1}, the forced equation
\begin{equation}
    \label{e:Feom1KK}
    \dot{\Pi}_t
    = X_{H_0}(\Pi_t) + F_0(\Pi_t).
\end{equation}
Let 
\begin{equation}
    \label{e:W}
    \mathcal{W} 
    := P\times_M T^*M \oplus P\times\gu^*
\end{equation}
and consider the $(\A_0,j)$-dependent isomorphism
\begin{equation}
    \Psi_j: 
    \mathcal{W}\to T^*P,\quad
    (\om, \nu; \om, q)
    \mapsto 
    \Big((\hl^{\mathcal{A}_0})^*\Big)^{-1}(\om,\nu)
    +  
    \Big((\tau,J+j)|\ver_0^*\Big)^{-1}(\om,q)
\end{equation}
where $\hl^{\mathcal{A}_0}: P\times_M TM\to\hor$ is the  $\A_0$-horizontal lift, $\om\in P$
and
$\nu\in T^*M$ with $\pi(\om) = \tau(\nu)$. Here, and in the following, it is assumed that $j$ is chosen such that $\Psi_j$ is indeed an isomorphism. 
For $j=0$, $\Psi_j$ coincides with the $\A_0$-dependent isomorphism $\Psi_0$ defined in \eqref{app:Psi}. 

From now on we will identify $\mathcal{W} = P\times_M T^*M\oplus P\times\gu^* = P\times_M T^*M \times\gu^*$ and write elements $W\in\mathcal{W}$ as $W=(\om,\nu,q)$ where $\om\in P$, $\nu\in T^*M$, $q\in\gu^*$ and it is understood that $\pi(\om) = \tau(\nu)$. Correspondingly, we introduce the the projection maps
\[
 (\pr_1,\pr_2,\pr_3): \mathcal{W}\to P\times T^*M\times\gu^*,
 \quad
 W\mapsto (\om,\nu,q).
\]
Further, we will abbreviate $(\pr_1,\pr_2) = \pr_{1,2}$ and $(\pr_1,\pr_3) = \pr_{1,3}$. 

The isomorphism $\Psi_j$ can be decomposed as 
$\Psi_0(\om,\nu,q) 
= 
\Psi_0^{\textup{hor}}(\om,\nu) + \Psi_j^{\textup{ver}}(\om,q)$, 
where
\begin{equation}
\Psi_0^{\textup{hor}}: P\times_M T^*M\to \hor^*,\quad
 \Psi_0^{\textup{hor}}(\om,\nu)
 = \Big((\hl^{\mathcal{A}_0}_{\om})^*\Big)^{-1}\nu \in\hor^*(\om) 
\end{equation}
and 
\begin{equation}
\Psi_j^{\textup{ver}}: P\times\gu^* \to \ver_0^*,\quad
 \Psi_j^{\textup{ver}}(\om,q)
 = \Big((J + j)|\ver_0^*(\om)\Big)^{-1} q
 \in\ver_0^*(\om) 
\end{equation}
with $\ver_0^*(\om) = T_{\om}^*P\cap \ver_0^*$
and  $\hor^*(\om) = T_{\om}^*P\cap \hor^*$.

The isomorphism $\Psi_0$ induces a $G$-action on $\mathcal{W}$. This action leaves the induced symplectic form $\Psi_0^*\Om^{T^*P}$ invariant and has a momentum map 
\[
 J_{\mathcal{W}} := \pr_3 = \Psi_0^*J: (\om,\nu,q)\mapsto q.  
\]

The isomorphism $\Psi_j$ induces a $G$-action on $\mathcal{W}$. This action leaves the induced symplectic form $\Psi_j^*\Om^{T^*P}$ invariant and has a momentum map 
\begin{align}
    \notag
    \Psi_j^*J
    &= (\Psi_0^{\textup{hor}}\circ\pr_{1,2}+\Psi_j^{\textup{ver}}\circ\pr_{1,3})^*J
    = (\Psi_j^{\textup{ver}}\circ\pr_{1,3})^*J\\
    &= J_{\mathcal{W}} - j\circ(\tau,J+j)|_{\textup{Ver}_0^*}^{-1}\circ\pr_{1,3}
    =  J_{\mathcal{W}} - (\Psi_j^{\textup{ver}}\circ\pr_{1,3})^*j
    \label{e:J^j}
\end{align}
Since $\Psi_j$ is assumed to be an isomorphism, the $\om$-dependent map $(J\circ\Psi_j^{\textup{ver}})_{\om}: \gu^*\to\gu^*$ is an isomorphism as-well. Therefore, \eqref{e:J^j} implies that the same is true for $(1-j\circ\Psi_j^{\textup{ver}})_{\om}$, for each $\om\in P$.  Further, the map $1-j\circ\Psi_j^{\textup{ver}}: P\times\gu^*\to\gu^*$ is $G$-equivariant. 

The actions induced by $\Psi_0$ and $\Psi_j$ coincide, since $j$ is equivariant. Therefore, we will simply speak of the $G$-action on $\mathcal{W}$. It is given by
\[
 (\om,\nu,q).g = (\om.g,\nu,\Ad(g)^*.q)
\]
for $g\in G$. 

\begin{remark}
The momentum maps $J_{\mathcal{W}}$ and $\Psi_j^*J$, associated respectively to $\Psi_0^*\Om^{T^*P}$ and $\Psi_j^*\Om^{T^*P}$, do not necessarily coincide for $j\neq 0$. However, with assumption \eqref{e:B-j-tilde} below it follows, that $\Psi_0 = \Psi_j$.  
\end{remark}

The infinitesimal generator associated to an element $Y\in\gu$ shall be denoted by $\zeta^{\mathcal{W}}_Y$ where $\zeta^{\mathcal{W}}: \gu\to\X(\mathcal{W})$ is the fundamental vector field mapping. 

Consider the mapping  $(\id,D^0): P\times\gu^*\to P\times_M T^*M$, $(\om,p)\mapsto(\om,D_{\om}^0.p)$, where $D_{\om}^0: \gu^*\to T^*M$ is defined as
\begin{equation}
    \label{e:D^0}
    D_{\om}^0 
    := 
    (\pr_2\circ(\hlA)^*\circ\hpr_{\Gamma}^*\circ\Psi_0^{\textup{ver}})_{\om}
    : \gu^*\to T_x^*M
\end{equation}
where $x=\pi(\om)$.

\begin{proposition}\label{prop:X^j}
Assume $\Psi_j$ is an isomorphism. 
Consider \eqref{e:Feom1KK}.
Let
$H^j := \Psi_j^*H_0$,
$X^j := \Psi_j^*X_{H_0}$ and $F^j := \Psi_j^*F_0$. Use the projections $(\pr_1,\pr_2,\pr_3)$ to set 
\[
 (T\pr_1,T\pr_2,d\pr_3).X^j = (X^j_1,X^j_2,X^j_3)\in TP\times TT^*M\times \gu^*
\]
and 
\[
 (T\pr_1,T\pr_2,d\pr_3).F^j = (F^j_1,F^j_2,F^j_3)\in TP\times TT^*M\times \gu^*.
\]
Let $\hl_{\mu_0^M}^*: T^*M\oplus TM\to \hor(\mu_0^M) \subset TT^*M$ denote the Riemannian horizontal lift and $\textup{vl}^*: T^*M\oplus T^*M \to \ver(\tau_{T^*M}) = \ker T\tau_{T^*M}\subset TT^*M$ the vertical lift. Then
\begin{align}
    X^j_1(W)
    &= 
    \hl_{\om}^{\mathcal{A}_0}((\mu_0^M)^{-1}\nu) 
    + \zeta_{\tilde{\mathbb{I}}^{-1}p}(\om)
    \\
    \label{e:X_2^j}
    X^j_2(W)
    &=
    (\hl_{\mu_0^M}^*)_{\nu}((\mu_0^M)^{-1}\nu)
    - \textup{vl}_{\nu}^*(d^{\textup{hor}}H^j(W) + L(W)) \\
    \label{e:X_3^j}
    X^j_3
    &= 
    (1 - j\circ\Psi_j^{\textup{ver}})^{-1}.dj.T\Psi_j^{\textup{ver}}.(X^j_1,0)\\
    F^j_1 
    &= 0\\
    \label{e:F_2^j}
    F^j_2(W)
    &= 
    \vl_{\nu}^*\Big(D_{\om}^0.B_{\om}.dj.T\Psi_j.X^j(W)\Big)
    \\
    F^j_3(W)
    &= 
    -(1 + j\circ\hpr_{\Gamma}^*\circ(\Psi_0^{\textup{ver}})_{\om}\circ B_{\om}).dj.T\Psi_j.X^j(W) 
\end{align}
for $W = (\om,\nu,p)\in\mathcal{W} = P\times_{M}T^*M\times\gu^*$ and 
where $\tilde{\mathbb{I}}: \gu\to\gu^*$ is the $\om$-dependent inertia tensor
\begin{equation}
    \label{e:IC}
    \tilde{\mathbb{I}}
    := (1-j\circ\Psi_j^{\textup{ver}})^{-1}\mathbb{I}_0. 
\end{equation}
The operators $d^{\textup{hor}}$ and $L$ are defined below in \eqref{e:d^hor} and \eqref{e:L}. 
\end{proposition}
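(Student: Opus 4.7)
The plan is to treat each of the six components of $(X^j,F^j)$ separately, relying on the fact that $\Psi_j$ is, at each fixed $\omega\in P$, a linear isomorphism on fibres, which lets us push vertical lifts and fibre-tangent directions through $T\Psi_j^{-1}$ by identifying them with the underlying linear map.

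First I would handle the projections onto the base $P$. Because $H_0$ is of Kaluza--Klein type, $T\tau.X_{H_0}(\Pi)=(\mu_0^P)^{-1}\Pi$, and this splits under $\hor_0\oplus\ver$ into the $\A_0$-horizontal lift of $(\mu_0^M)^{-1}\nu$ and a fundamental vector field $\zeta_Y(\omega)$. To identify $Y$ I would use $J^{-1}(0)=\hor^*$ together with $\A_0$-locked inertia: for the vertical component $\Psi_j^{\textup{ver}}(\omega,p)$ one has $J(\Psi_j^{\textup{ver}}(\omega,p))=p-j\circ\Psi_j^{\textup{ver}}(\omega,p)$ and $J$ applied to $(\mu_0^P)^{-1}\Psi_j^{\textup{ver}}(\omega,p)$ equals $\I_0 Y$; solving gives $Y=\I_0^{-1}(1-j\circ\Psi_j^{\textup{ver}})p=\tilde{\I}^{-1}p$, which is exactly the asserted formula. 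Since $F_0$ is vertical with respect to $\tau$ by Definition~\ref{def:admissible}, $F^j_1=T\tau.F_0\circ\Psi_j=0$ is immediate.

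For $X^j_3$ and $F^j_3$ I would exploit the (non)conservation of $J$. From $G$-invariance of $H$ we get $dJ.X_{H_0}=0$, while Proposition~\ref{prop:force}(3) combined with \eqref{e:defSymAct} gives $dJ.F_0=-B.dj.X_{H_0}$. Pulling these back using \eqref{e:J^j}, $\Psi_j^*J=J_{\mathcal{W}}-j\circ\Psi_j^{\textup{ver}}\circ\pr_{1,3}$, and noting that $\Psi_j^{\textup{ver}}$ is linear in the $\gu^*$-fibre so that $d(j\circ\Psi_j^{\textup{ver}})$ decomposes into a $T\Psi_j^{\textup{ver}}.(\,\cdot\,,0)$ piece plus the linear fibre action $(j\circ\Psi_j^{\textup{ver}})_\omega$, produces two linear equations $(1-j\circ\Psi_j^{\textup{ver}})X^j_3=dj.T\Psi_j^{\textup{ver}}.(X^j_1,0)$ and (using $F^j_1=0$) $(1-j\circ\Psi_j^{\textup{ver}})F^j_3=-B.dj.T\Psi_j.X^j$; inversion yields \eqref{e:X_3^j} directly, and the formula for $F^j_3$ after identifying $(1-j\circ\Psi_j^{\textup{ver}})^{-1}$ with $1+j\circ\hpr_\Gamma^*\circ\Psi_0^{\textup{ver}}\circ B$ via the explicit form of $j$ in Definition~\ref{def:SymCloLoop}(2).

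The two remaining pieces are $X^j_2$ and $F^j_2$. For $F^j_2$ I would use Proposition~\ref{prop:force}(2), which writes $F_0(\Pi)=\vl^*(\Pi,\mathcal{U}(\Pi))$ with $\mathcal{U}(\Pi)=((\tau,J)|\ver_\Gamma^*)^{-1}(\omega,f(\Pi))\in\ver_\Gamma^*$. Since $\Psi_j^{-1}$ is fibre-linear, pushing a vertical lift through it reduces to applying the fibre restriction of $\Psi_j^{-1}$; projecting to $T^*M$ amounts to composing $\mathcal{U}$ with $\hpr_0^*$ and then $(\hl^{\A_0})^*$. The algebraic key is that $\mathcal{U}\in\ver_\Gamma^*$ satisfies $\hpr_\Gamma^*\mathcal{U}=0$, which forces $\hpr_0^*\mathcal{U}=-\hpr_\Gamma^*\vpr_0^*\mathcal{U}=-\hpr_\Gamma^*\Psi_0^{\textup{ver}}(\omega,f(\Pi))$; combined with $f=-B.dj.X_{H_0}$ and the definition \eqref{e:D^0} of $D^0_\omega$, this produces \eqref{e:F_2^j}. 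Finally, $X^j_2$ is the main technical obstacle: I would invoke the global cotangent-bundle symplectic formula from the appendix (\eqref{app:OmK}) together with Hamilton's equation $i_{X_{H_0}}\Om^{T^*P}=-dH_0$, express everything in the $(\omega,\nu,p)$-coordinates on $\mathcal{W}$, read off the $T^*M$-component via the Riemannian horizontal and vertical lifts, and collect the correction terms into $d^{\textup{hor}}H^j$ and the operator $L$ introduced in \eqref{e:d^hor}--\eqref{e:L}. The hardest bookkeeping will be isolating the curvature and $j$-contributions along horizontal directions that distinguish the present computation from the classical one for $j=0$.
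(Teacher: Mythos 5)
Your proposal is correct, and for several of the six components it takes a genuinely more direct route than the paper, which derives everything from a single computation of the pulled-back symplectic form $\Om^j=\Psi_j^*\Om^{T^*P}$ paired against test vectors (vertical lifts, horizontal lifts, fundamental vector fields). You instead get $X^j_1$ from the Legendre-transform identity $T\tau.X_{H_0}(\Pi)=(\mu_0^P)^{-1}\Pi$ together with $\A_0\big((\mu_0^P)^{-1}\Pi\big)=\I_0^{-1}J(\Pi)$ and \eqref{e:J^j}, and you get $X^j_3$ and $F^j_3$ by pulling the relations $dJ.X_{H_0}=0$ and $dJ.F_0=-B.dj.X_{H_0}$ back through $\Psi_j^*J=J_{\W}-(\Psi_j^{\textup{ver}}\circ\pr_{1,3})^*j$ and splitting off the fibre-linear part of $d(j\circ\Psi_j^{\textup{ver}})$; this is equivalent to, but shorter than, the paper's test-vector argument (which for $X^j_3$ is Noether's theorem in the $\W$-picture). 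The one identity your route leaves implicit for $F^j_3$, namely $(1-j\circ\Psi_j^{\textup{ver}})^{-1}\circ B=1+j\circ\hpr_{\Gamma}^*\circ\Psi_0^{\textup{ver}}\circ B$, does hold: Definition~\ref{def:SymCloLoop}(2) gives $j\circ\Psi_0^{\textup{ver}}=B^{-1}\tilde{\jmath}\circ\hpr_{\Gamma}^*\circ\Psi_0^{\textup{ver}}+B^{-1}-1$ and $j\circ\hpr_{\Gamma}^*=B^{-1}\tilde{\jmath}\circ\hpr_{\Gamma}^*$ (since $J$ vanishes on $\hor^*$), while fibre-linearity yields $(1-j\circ\Psi_j^{\textup{ver}})^{-1}=1+j\circ\Psi_0^{\textup{ver}}$, and the three combine to the claimed operator identity. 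Your treatments of $F^j_1$, of $F^j_2$ (via \eqref{e:CLU}, the projection identity $\hpr_0^*|\ver_{\Gamma}^*=-\hpr_{\Gamma}^*\circ\vpr_0^*|\ver_{\Gamma}^*$, \eqref{e:defSymAct} and \eqref{e:D^0}), and of $X^j_2$ (via $\Om^j$ and \eqref{app:OmK}, with $L$ absorbing the curvature and $j$-terms) coincide with the paper's. What the paper's uniform approach buys is that the formula for $\Om^j$ must be computed anyway for $X^j_2$; what your shortcuts buy is that the remaining components follow from structural facts without touching the symplectic form at all.
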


\begin{proof}
Let $\theta^{T^*P}$ be the canonical one-form on $T^*P$ with $\Om^{T^*P} = -d\theta^{T^*P}$. Let $W=(\om,\nu,p)\in\mathcal{{W}}$ and $X = (X_1,X_2,X_3)\in T_{\om}P\times T_{\nu}T^*M\times\gu^*$ with $T\pi.X_1 = T\tau.X_2 = u$. Further, let $X_1 = \hl_{\om}^{\mathcal{A}_0}(u) + \zeta_Y(\om)$. Using \eqref{e:J^j},
\begin{align*}
    \Big(\Psi_j^*\theta^{T^*P}\Big)_W(X_1,X_2,X_3)
    &=
    \vv<\nu,u> 
        + \vv<\Psi_j^{\textup{ver}}(\om,q),\zeta_Y(q)>\\
    &=
    \theta^{T^*M}_{\nu}(X_2) 
        + \vv<(J_{\mathcal{W}}-(\Psi_j^{\textup{ver}}\circ\pr_{1,3})^*j)(W),Y>
\end{align*}
where $\theta^{T^*M}$ is the canonical one-form on $T^*M$ with $\Om^{T^*M} = -d\theta^{T^*M}$.
Since $\Om^j := \Psi_j^*\Om^{T^*P} = -d\Psi_j^*\theta^{T^*P}$, and using the Maurer-Cartan formula
\[
 \CurvA = d\A_0 + \by{1}{2}[\A_0,\A_0]_{\wedge}
\]
as-well as $\pr_1^*\A_0(X) = Y$, 
it follows that 
\begin{align*}
    \Om^j
    = 
    \pr_2^*\Om^{T^*M}
    &
    - 
    \Big\langle 
        d\Big(J_{\mathcal{W}}-(\Psi_j^{\textup{ver}}\circ\pr_{1,3})^*j\Big)
        \,\overset{\wedge}{,}\,
        \pr_1^*\A_0 
    \Big\rangle \\
    &
    -
    \Big\langle
        J_{\mathcal{W}}-(\Psi_j^{\textup{ver}}\circ\pr_{1,3})^*j 
            ,
        \pr_1^*(\CurvA - \by{1}{2}[\A_0,\A_0]_{\wedge})
    \Big\rangle.
\end{align*}
This formula will be used throughout the rest of the proof without further reference.

Equation~\eqref{e:J^j} implies that 
\begin{align*}
    \vv<\mathbb{I}_0^{-1} J \Psi_j^{\textup{ver}}. p, J \Psi_j^{\textup{ver}}. p>
    &= 
    \vv<\mathbb{I}_0^{-1}(1 - j\Psi_j^{\textup{ver}}). p, (1-j\Psi_j^{\textup{ver}}). p>
    =
    \vv<\mathbb{I}_j^{-1} p, p>
\end{align*}
where the $\om$-dependence is understood implicitly and 
\[
 \mathbb{I}_j^{-1}
 := (1-j\Psi_j^{\textup{ver}})^*\mathbb{I}_0^{-1}(1 - j\Psi_j^{\textup{ver}}).
\]
This yields 
$H^j(W) 
= \by{1}{2}\vv<\nu,(\mu_0^M)^{-1}\nu> 
+
\by{1}{2}\vv<\mathbb{I}_0^{-1} J \Psi_j^{\textup{ver}}. p, J \Psi_j^{\textup{ver}}. p>
= \by{1}{2}\vv<\nu,(\mu_0^M)^{-1}\nu> + \by{1}{2}\vv<p,\mathbb{I}_j^{-1}p>$.

Since $X^j(W)\in T_W\mathcal{W}$ it follows that $T\pi.X^j_1(W) = T\pi.T\pr_1.X^j(W) = T\tau_{T^*M}.T\pr_2.X^j(W) = T\tau_{T^*M}.X^j_2(W) =: u \in T_xM$ where $x = \pi(\om)$.

To calculate $X_1^j$:  
Decompose $X_1^j(W)$ as $X_1^j(W) = \hlA(\om,u) + \zeta_Y(\om)$ with $Y = \A_0.X_1^j(W)$.
Consider $\eta_2\in T_x^*M$ and 
\[
 Z := \Big(0 , \vl^*_{\nu}(\eta_2), 0 \Big) \in T_W\mathcal{W}.
\]
Then, using the $\mu_0^M$-connector $K_0^*: TT^*M\to T^*M$
together with \eqref{app:OmK},
\begin{equation}
    dH^j.Z
    =
    \vv<(\mu_0^M)^{-1}\nu, \eta_2 >
    =
    \Om^j(X^j(W),Z)
    =
    \Om^{T^*M}(X_2^j(W),\vl_{\nu}^*(\eta_2))
    =
    \vv<u, \eta_2>
\end{equation}
whence $u = (\mu_0^M)^{-1}\nu$. Consider now $\dot{p}_2\in \gu^*$ and 
\[
 Z = \Big(0, 0, \dot{p}_2 \Big) \in T_W\mathcal{W}.
\]
Then
\begin{align*}
    dH^j(W).Z
    &=
    \vv<\mathbb{I}_j^{-1}p, \dot{p}_2 >
    =
    \Om^j(X^j(W),Z)
    =
    \vv<d\Big(J_{\mathcal{W}}-(\Psi_j^{\textup{ver}}\circ\pr_{1,3})^*j\Big).Z, \A_0.X_1^j(W)> \\
    &=
    \vv< \dot{p}_2 - dj.T\Psi_j^{\textup{ver}}.(0,\dot{p}_2), Y>
    =
    \vv< (1 - j\Psi_j^{\textup{ver}})_{\om}.\dot{p}_2, Y>    
    =
    \vv< \dot{p}_2, (1 - j\Psi_j^{\textup{ver}})_{\om}^*.Y>  
\end{align*}
where the penultimate equality follows because $j$ is fiberwise linear, and we have indicated the $\om$-dependence. 
Therefore,
\[
 Y 
 = \mathbb{I}_0^{-1}(1 - j\Psi_j^{\textup{ver}}).p
 = \tilde{\mathbb{I}}^{-1}p.
\]

To calculate $X_2^j$:  
Decompose $X_2^j(W)$ as $X_2^j(W) = (\hl_{\mu_0^M}^*)_\nu(u) + \vl^*_{\nu}(\eta)$ with $\eta\in T_x^*M$. Consider now $u_2\in T_xM$ and 
\[
 Z = \Big(\hlA(\om,u_2), (\hl_{\mu_0^M}^*)_{\nu}(u_2), 0\Big) \in T_W\mathcal{W}.
\]
Then
\begin{equation}
    \label{e:d^hor}
    \Om^j(X^j,Z)
    = dH^j(W).\Big(\hlA(\om,u_2), (\hl_{\mu_0^M}^*)_{\nu}(u_2), 0\Big)
    =: d^{\textup{hor}}H^j(W).u_2
\end{equation}
and, using the $\mu_0^M$-connector $K_0^*: TT^*M\to T^*M$
together with \eqref{app:OmK},
\begin{align*}
    \Om^j(X^j,Z)
    &= 
    \Om^{T^*M}(X_2(W), Z)
    - 
    \langle 
        d J_{\mathcal{W}}
            \,\overset{\wedge}{,}\,
        \pr_1^*\A_0
    \rangle (X_2(W), Z)\\
    &\phantom{==}
    +
    \langle
        (\Psi_j^{\textup{ver}}\circ\pr_{1,3})^*dj
            \,\overset{\wedge}{,}\,
        \pr_1^*\A_0 
    \rangle (X_2(W), Z)\\
    &\phantom{==}
    -
    \langle
        (J_{\mathcal{W}}-(\Psi_j^{\textup{ver}}\circ\pr_{1,3})^*j)(W) 
            ,
        \CurvA(u,u_2) 
    \rangle\\
    &= 
    -
    \vv<(K_0^*)_{\nu}(X_2(W)), u_2> 
    - 
    \vv< dj.T\psi_j^{\textup{ver}}.T\pr_{1,3}.Z, Y>\\
    &\phantom{==}
    -
    \vv< p - j(\Psi_j^{\textup{ver}}(\om,p)), \CurvA(u,u_2) >\\
    &= 
    -
    \vv<\eta , u_2> 
    - 
    \vv< dj.T\psi_j^{\textup{ver}}.(\hlA(\om,u_2),0), \tilde{\mathbb{I}}^{-1}p>\\
    &\phantom{==}
    -
    \vv< (1 - j\circ\Psi_j^{\textup{ver}}(\om)).p, \CurvA(u,u_2) >
\end{align*}
Define $L(W)\in T_x^*M$ through 
\begin{equation}
    \label{e:L}
    \vv<L(\om,\nu,p),u_2>
    = 
    \vv< dj.T\psi_j^{\textup{ver}}.(\hlA(\om,u_2),0), \tilde{\mathbb{I}}^{-1}p>
    +
    \vv< (1 - j\circ\Psi_j^{\textup{ver}}(\om)).p, \CurvA(u,u_2) >.
\end{equation}
Hence,  $\eta = -d^{\textup{hor}}H^j(W) - L(W)$. 

To calculate $X_3^j$:  Consider now $Y_2\in \gu$ and 
\[
 Z 
 = \zeta^{\mathcal{W}}_{Y_2}(\om,\nu,p)
 = \Big(\zeta^P_{Y_2}(\om), 0, \ad(Y_2)^*.p \Big) 
 \in T_W\mathcal{W}.
\]
Then, because $H^j$ is $G$-invariant
and 
$dJ.T\Psi_j. \zeta^{\mathcal{W}}_{Y_2}(W)) 
= dJ.\zeta^{T^*P}_{Y_2}(\Psi_j(W)))
= \ad(Y_2)^*.J(\Psi_j(W)))$, 
\begin{align*}
    0
    &= 
    dH^j. \zeta^{\mathcal{W}}_{Y_2}(W)
    = 
    \Om^j(X^j(W),  \zeta^{\mathcal{W}}_{Y_2}(W))\\
    &=
    -
    \Big\langle 
        d\Big(J_{\mathcal{W}}-(\Psi_j^{\textup{ver}}\circ\pr_{1,3})^*j\Big)
        \,\overset{\wedge}{,}\,
        \pr_1^*\A_0 
    \Big\rangle . \Big(X^j(W),  \zeta^{\mathcal{W}}_{Y_2}(W)\Big)
    +
    \Big\langle 
        p - j(\Psi_j^{\textup{ver}}(\om,p))
        ,
        [Y,Y_2]
    \Big\rangle\\
    &=
    -
    \Big\langle 
        d\Big(J_{\mathcal{W}}-(\Psi_j^{\textup{ver}}\circ\pr_{1,3})^*j\Big).X^j(W)
        ,
        Y_2
    \Big\rangle
    +
    \Big\langle 
        dJ.T\Psi_j. \zeta^{\mathcal{W}}_{Y_2}(W))
        ,
        Y
    \Big\rangle 
    +    
    \Big\langle 
        J(\Psi_j(W))
        ,
        [Y,Y_2]
    \Big\rangle\\
    &=
    -
    \Big\langle 
        dJ_{\mathcal{W}}.X^j(W)
        - dj.T\Psi_j^{\textup{ver}}.(X^j_1(W),X^j_3(W))
        ,
        Y_2
    \Big\rangle
\end{align*}
whence 
\begin{align*}
 X_3^j(W)
 &= dJ_{\mathcal{W}}.X^j(W)
 = dj.T\Psi_j^{\textup{ver}}.(X^j_1(W),X^j_3(W))\\
 &= dj.T\Psi_j^{\textup{ver}}.(X^j_1(W),0)
  + dj.T\Psi_j^{\textup{ver}}.(0,X^j_3(W)).
\end{align*}
Now,
since $j\circ\Psi_j^{\textup{ver}}: P\times\gu^*\to\ver_0^*$ is linear in $\gu^*$, it follows that
$dj.T\Psi_j^{\textup{ver}}.(0,X_3^j(W)) 
= \dd{t}{}|_0(j\circ\Psi_j^{\textup{ver}})(\om,t X_3^j(W))
= (j\circ\Psi_j^{\textup{ver}})_{\om}.X_3^j(W)$, whence
\[
 (1 - j\circ\Psi_j^{\textup{ver}}).X_3^j
 = dj.T\Psi_j^{\textup{ver}}.(X^j_1(W),0)
\]
and the $\om$-dependent map $1 - j\circ\Psi_j^{\textup{ver}}: \gu^*\to\gu^*$ is  invertible.

To calculate $F_1^j$:  
Note that $\pr_1(W) = \om = (\tau\circ\Psi_j)(W)$. Therefore,
\[
 F_1^j
 = T\pr_1.F^j
 = T\pr_1.T\Psi_j^{-1}.(F\circ\Psi_j)
 = T\tau.(F\circ\Psi_j)
 = 0
\]
since $F_0$ is vertical. 

To calculate $F_2^j$:  Let $\mathcal{U}^j := \Psi_j^{-1}\circ\mathcal{U}\circ\Psi_j$ where $\mathcal{U}: T^*P\to\ver_{\Gamma}^*$ is defined in Proposition~\ref{prop:force}. Equation~\ref{e:CLU} then yields 
\begin{align*}
 F_2^j(W)
 &= \dd{t}{}|_0(\nu + t (\pr_2\circ\mathcal{U}^j)(W)) \\
 &= \vl^*_{\nu}
    \Big(
    (\pr_2\circ\Psi_j^{-1}\circ(\tau,J)|_{\textup{Ver}_{\Gamma}^*}^{-1})(\om,f(\Psi_j(W))
    \Big) \\
 &= \vl^*_{\nu}
    \Big(
    (\pr_2\circ(\hlA)^*\circ\hpr_0^*\circ(J|\ver_{\Gamma}^*(\om))^{-1})(-B_{\om}.dj.(X_{H_0}(\Psi_j(W))))
    \Big). 
\end{align*}
Since 
$\hpr_{\Gamma}^*\circ\Psi_0^{\textup{ver}} 
= -\hpr_0^*\circ\vpr_{\Gamma}^*\circ\Psi_0^{\textup{ver}} 
= -\hpr_0^*\circ(J|\ver_{\Gamma}^*)^{-1}$,
it follows that
\[
 F_2^j(W)
 = \vl^*_{\nu}
    \Big(
    (\pr_2\circ(\hlA)^*\circ\hpr_{\Gamma}^*\circ\Psi_0^{\textup{ver}})_{\om}
    (B_{\om}.dj.T\Psi_j.(X^j(W)))
    \Big). 
\]

To calculate $F_3^j$:  
Equations~\eqref{e:j-force-H_0} and \eqref{e:J^j} imply that 
\begin{align*}
    F_3^j
    &=
    dJ_{\mathcal{W}}.\Psi_j^*F
    =
    d\Big((\Psi_j^{-1})^*J_{\mathcal{W}}\Big).(F\circ\Psi_j)\\
    &=
    d\Big(J + (\Psi_j^{-1})^*(\Psi_j^{\textup{ver}}\circ\pr_{1,3})^*j\Big)
        .(F\circ\Psi_j)\\
    &=
    d\Big(J + \vpr_0^*j\Big)
        .(F\circ\Psi_j)\\
    &=
    dJ.(F\circ\Psi_j)
    + dj.(F\circ\Psi_j)
    - dj.T\hpr_0^*.(F\circ\Psi_j)\\
    &=
    -dj.(X_{H_0}\circ\Psi_j)
    -d(j\circ\hpr_0^*)(F\circ\Psi_j)\\
    &=
    -dj.T\Psi_j.X^j
    -\dd{t}{}|_0(j\circ\hpr_0^*)(\Psi_j + t\mathcal{U}\circ\Psi_j)\\
    &= 
    -dj.T\Psi_j.X^j
    -(j\circ\hpr_0^*\circ J|_{\textup{Ver}_{\Gamma}^*(\om)}^{-1})(f\circ\Psi_j)\\
    &= 
    -\Big(
     1+
     j\circ\hpr_{\Gamma}^*\circ\Psi_0^{\textup{ver}}\circ B_{\om}
     \Big)
     dj.T\Psi_j.X^j
\end{align*}
where we use again that
$\hpr_{\Gamma}^*\circ\Psi_0^{\textup{ver}} 
= -\hpr_0^*\circ\vpr_{\Gamma}^*\circ\Psi_0^{\textup{ver}} 
= -\hpr_0^*\circ(J|\ver_{\Gamma}^*)^{-1}$.
\end{proof}

In order to make use of the conservation law \eqref{e:ConsLaw}, we define the map 
\begin{equation}
    \label{e:Phi^j}
    \Phi_j:
    \W\to\W, \quad
    (\om,\nu,q)
    \mapsto
    (\om,\nu, q - (j\circ\Psi_0^{\textup{hor}})(\om,\nu) ).
\end{equation}

\begin{proposition}\label{prop:Xtilde}
Consider \eqref{e:Feom1KK}.
Let
$X^{jj} := \Phi_j^*\Psi_j^*X_{H_0}$ and $F^{jj} := \Phi_j^*\Psi_j^* F_0$
Then
\begin{align}
    \label{e:tilde1}
    T\pr_1.\Big(X^{jj} + F^{jj}\Big) 
    &= X^j_1\circ\Phi_j 
    \\
    \label{e:ConsLawTilde}
    dJ_{\mathcal{W}}.\Big(X^{jj} + F^{jj}\Big) 
    &= d\pr_3.\Big(X^{jj} + F^{jj}\Big)
    = 0.
\end{align}
Let $B$ be defined by
\begin{equation}
    \label{e:B-j-tilde}
    B_{\om} 
    := 
    \Big(
        1 + \tilde{\jmath}\circ\hpr_{\Gamma}^*\circ\Psi_0^{\textup{ver}}
    \Big)_{\om}. 
\end{equation}
It follows that 
\begin{equation}
\label{e:j-ver-0}
    j|\ver_0^* = 0, 
\end{equation}
$\Psi_j = \Psi_0$ and $j\circ\Psi_0^{\textup{ver}} = 0$,
and
\begin{equation}
    \label{e:tilde2}
    T\pr_2.\Big(X^{jj} + F^{jj}\Big) 
    = 
    X^j_2\circ\Phi_j
    +
    \vl_{\nu}^*
    \Big(
        D_{\om}^0.B_{\om}.dj.T\Psi_0^{\textup{hor}}.(X_1^j\circ\Phi^j,X_2^j\circ\Phi^j)
    \Big).
\end{equation}
\end{proposition}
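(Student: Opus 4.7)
The plan is to establish part~(3) first, since once $\Psi_j = \Psi_0$ and $j|\ver_0^* = 0$ are known, the remaining claims reduce to essentially mechanical pullback computations. To prove (3), I would substitute $\Pi = \Psi_0^{\textup{ver}}(\om,q) \in \ver_0^*$, for which $J(\Pi) = q$, into the defining expression $j(\Pi) = B_\om^{-1}\tilde{\jmath}(\hpr_{\Gamma}^*\Pi) + (B_\om^{-1}-1)J(\Pi)$ from Definition~\ref{def:SymCloLoop}. Assumption~\eqref{e:B-j-tilde} forces $\tilde{\jmath}(\hpr_{\Gamma}^*\Psi_0^{\textup{ver}}(\om,q)) = (B_\om-1)q$, so the two summands of $j(\Pi)$ cancel algebraically and \eqref{e:j-ver-0} follows. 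Consequently $\Psi_j^{\textup{ver}} = ((J+j)|\ver_0^*)^{-1} = (J|\ver_0^*)^{-1} = \Psi_0^{\textup{ver}}$, whence $\Psi_j = \Psi_0$ and $j\circ\Psi_0^{\textup{ver}} = 0$.

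For \eqref{e:tilde1}, I would use that $\Phi_j$ fixes the first two components of $\W$, so $\pr_k \circ \Phi_j = \pr_k$ and therefore $T\pr_k.(\Phi_j^* Y) = (T\pr_k.Y)\circ\Phi_j$ for any $Y\in\X(\W)$ and $k=1,2$. Applied to $Y = X^j + F^j$ and combined with $F_1^j = 0$ from Proposition~\ref{prop:X^j}, this yields \eqref{e:tilde1} at once.

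For \eqref{e:ConsLawTilde}, the idea is to transport Theorem~\ref{thm:ConsLaw}, i.e.\ $d(J+j).(X_{H_0}+F_0) = 0$, through the pullbacks by $\Psi_j = \Psi_0$ and $\Phi_j$. Formula~\eqref{e:J^j} together with $j\circ\Psi_0^{\textup{ver}} = 0$ gives $\Psi_0^* J = J_\W$, and $\Psi_0^* j = j\circ\Psi_0^{\textup{hor}}\circ\pr_{1,2}$ because the vertical contribution again vanishes. By definition~\eqref{e:Phi^j} of $\Phi_j$ we have $J_\W\circ\Phi_j = J_\W - j\circ\Psi_0^{\textup{hor}}\circ\pr_{1,2}$, while $j\circ\Psi_0^{\textup{hor}}\circ\pr_{1,2}$ is $\Phi_j$-invariant. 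Adding the two, $\Phi_j^*\Psi_0^*(J+j) = J_\W$, and conservation along the flow of $X_{H_0}+F_0$ pulls back to $dJ_\W.(X^{jj}+F^{jj}) = 0$.

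Finally, for \eqref{e:tilde2}, the same pullback identity used in (1) gives $T\pr_2.(X^{jj}+F^{jj}) = (X_2^j+F_2^j)\circ\Phi_j$. Proposition~\ref{prop:X^j} supplies $F_2^j(W) = \vl_\nu^*(D_\om^0.B_\om.dj.T\Psi_j.X^j(W))$. Since $j\circ\Psi_0^{\textup{ver}} = 0$ identically on $P\times\gu^*$, differentiation yields $dj\circ T\Psi_0^{\textup{ver}} = 0$; together with the splitting $\Psi_j = \Psi_0 = \Psi_0^{\textup{hor}}\circ\pr_{1,2} + \Psi_0^{\textup{ver}}\circ\pr_{1,3}$ this reduces $dj.T\Psi_j.X^j$ to $dj.T\Psi_0^{\textup{hor}}.(X_1^j,X_2^j)$. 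Evaluating at $\Phi_j(W)$ and using $\pr_k\circ\Phi_j = \pr_k$ for $k=1,2$ yields \eqref{e:tilde2}. The main obstacle is the algebraic cancellation of part~(3); after that, parts~(1), (2), (4) are formalities, though the bookkeeping involving the two connections $\Gamma$ and $\A_0$ and the interplay between $\Psi_0^{\textup{hor}}$ and $\Psi_0^{\textup{ver}}$ must be handled carefully throughout.
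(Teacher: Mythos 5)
Your argument is essentially the paper's: \eqref{e:tilde1} and \eqref{e:tilde2} follow from the pullback identities $\pr_k\circ\Phi_j=\pr_k$ ($k=1,2$) combined with Proposition~\ref{prop:X^j}, and \eqref{e:ConsLawTilde} is obtained by transporting Theorem~\ref{thm:ConsLaw} through $\Phi_j^*\Psi_j^*$. Your verification of \eqref{e:j-ver-0} --- substituting $\Pi=\Psi_0^{\textup{ver}}(\om,q)$ into Definition~\ref{def:SymCloLoop} and cancelling $B_\om^{-1}(B_\om-1)q$ against $(B_\om^{-1}-1)q$ --- is correct and in fact more explicit than the paper's one-line assertion, and your reduction of \eqref{e:tilde2} via $dj\circ T\Psi_0^{\textup{ver}}=0$ matches the paper's computation.

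The one genuine issue is the order of quantifiers in your proof of \eqref{e:ConsLawTilde}. In the proposition, \eqref{e:tilde1} and \eqref{e:ConsLawTilde} are asserted \emph{before} the hypothesis \eqref{e:B-j-tilde} is introduced, i.e.\ for a general isomorphism $B$ as in Definition~\ref{def:SymCloLoop}, and the paper's proof explicitly notes that \eqref{e:ConsLawTilde} holds independently of that assumption. Your derivation routes everything through part~(3): you invoke $\Psi_j=\Psi_0$ and $j\circ\Psi_0^{\textup{ver}}=0$, both of which are consequences of \eqref{e:B-j-tilde}, so you only obtain \eqref{e:ConsLawTilde} in the restricted setting. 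The repair is short: by fiberwise linearity of $j$ and the splitting $\Psi_j=\Psi_0^{\textup{hor}}\circ\pr_{1,2}+\Psi_j^{\textup{ver}}\circ\pr_{1,3}$ one has, for arbitrary $B$,
\[
\Psi_j^*(J+j)
= J_{\mathcal{W}} - (\Psi_j^{\textup{ver}}\circ\pr_{1,3})^*j + \Psi_j^*j
= J_{\mathcal{W}} + (\Psi_0^{\textup{hor}}\circ\pr_{1,2})^*j,
\]
and composing with $\Phi_j$ still yields $\Phi_j^*\Psi_j^*(J+j)=J_{\mathcal{W}}$, after which Theorem~\ref{thm:ConsLaw} applies exactly as you say. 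Since Section~\ref{sec:match} only ever uses the proposition with \eqref{e:B-j-tilde} in force, the loss of generality is harmless for the applications, but as written your proof does not cover the statement's full claim.
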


\begin{proof}
Equation~\ref{e:tilde1} is immediate from Proposition~\ref{prop:X^j}.

Equation~\eqref{e:ConsLawTilde}:
Equation~\eqref{e:J^j} yields
\[
    \Psi_j^*(J+j)
    =
    J_{\mathcal{W}} - (\Psi_j^{\textup{ver}}\circ\pr_{1,3})^*j + \Psi_j^*j
    =
    J_{\mathcal{W}} + (\Psi_0^{\textup{hor}}\circ\pr_{1,2})^*j.
\]
On the other hand, because $J_{\mathcal{W}}(\om,\nu,p) = p$, 
\[
 (\Phi_j^{-1})^*J_{\mathcal{W}}(\om,\nu,p)
 = J_{\mathcal{W}}(\om,\nu,p) 
    + (j\circ \Psi_0^{\textup{hor}}\circ\pr_{1,2})(\om,\nu,p).
\] 
Hence 
$\Phi_j^*\Psi_j^*(J+j) = J_{\mathcal{W}}$, and 
\eqref{e:ConsLaw} implies (independently of assumption~\eqref{e:B-j-tilde})
\begin{align*}
    dJ_{\mathcal{W}}.(X^{jj}+F^{jj})
    &=
    d\Phi_j^*\Psi_j^*(J+j).\Phi_j^*\Psi_j^*(X_{H_0}+F)
    =
    d(J+j).(X_{H_0}\circ\Psi_j\circ\Phi_j + F\circ\Psi_j\circ\Phi_j)\\
    &=
    0.
\end{align*}
Alternatively, this result can be shown by a direct, but lengthy, calculation using the formulas in Proposition~\ref{prop:X^j}.

Equation~\eqref{e:j-ver-0} follows from the definition of $j$. 

Equation~\ref{e:tilde2}:
Proposition~\ref{prop:X^j} implies 
$T\pr_2.F^{jj}
= T\pr_2.T\Phi_j^{-1}.(F^j\circ\Phi_j)
= F_2^j\circ\Phi_j
= \vl_{\nu}^*(D^0.B.d(j\circ\Psi_j).(X^j\circ\Phi_j))
=
\vl_{\nu}^*
    (
        D_{\om}^0.B_{\om}.dj.T\Psi_0^{\textup{hor}}.(X_1^j\circ\Phi^j,X_2^j\circ\Phi^j)
    )$.
\end{proof}

\begin{remark}
Equation~\eqref{e:j-ver-0} is the reason for introducing $B$ in Definition~\ref{def:SymCloLoop}. Due to \eqref{e:B-j-tilde}  $B$ is determined by $\tilde{\jmath}: \hor^*\to\gu^*$.
\end{remark}

\section{Matching}\label{sec:match}
Let the notation be as in Section~\ref{sec:bun-pic}. 
Consider equation~\eqref{e:Feom1KK} and assume $B$ is given by \eqref{e:B-j-tilde}.
Because of \eqref{e:j-ver-0} this implies that $\Psi_j = \Psi_0$. 


In the following, we are looking for a $G$-equivariant bundle isomorphism $\Phi: T^*P\to T^*P$, which maps \eqref{e:Feom1KK} to a Hamiltonian system (at least, when restricted to the pre-image of $J^{-1}(0)$ under this isomorphism).
The target Hamiltonian system should be of Kaluza-Klein form with respect to controlled data $(\mu_C^M,\I_C,\A_C)$, which are to be specified, and the unchanged potential $V$. 
Equivalently, the goal is to explicitly find a $G$-equivariant isomorphism $\Phi_C$
\begin{equation}
\xymatrix{
 T^*P 
 &\W
 \ar@{->}[l]_{\Psi_0} 
 &\W
 \ar@{->}[l]_{\Phi_j} 
 \ar@{..>}[r]^{\Phi_C} 
   &\W
 \ar@{->}[r]^{\Psi_C} 
 & T^*P
}    
\end{equation}
and a force $F_C$
such that \eqref{e:thm1}, \eqref{e:thm2} and \eqref{e:thm3} hold, and where $\Psi_C$ is the connection dependent isomorphism associated to the controlled connection $\A_C$.

\subsection{Matching conditions}
As above, $H_0$ denotes the Kaluza-Klein Hamiltonian associated to $(\mu_0^M,\I_0,\A_0)$ and the  potential $V$. 

Assume $H_C$ is a Kaluza-Klein Hamiltonian associated to $(\mu_C^M,\I_C,\A_C)$ and the  potential $V$. Let $\Psi_C: \W\to T^*P$ be the isomorphism corresponding to $\A_C$.
Let $\Phi_C = (\var_C,S)$, where $\var_C: P\times_M T^*M\to P\times_M T^*M$ and $S:\gu^*\to\gu^*$ are $G$-equivariant isomorphisms, and $\var_C$ is fiber-preserving. 
Let $F_C\in\X(T^*P)$ with
\begin{align}
    \label{e:thm1}
    T\tau.F_C &= 0\\
    \label{e:thm2}
    F_C|J^{-1}(0) &= 0
\end{align}
and define $F^C := \Psi_C^*F_C$.

\begin{theorem}\label{thm:MC}
Let $F_0$ be as in equation~\eqref{e:Feom1KK}.
The following are equivalent: 
\begin{enumerate}
    \item 
$\Phi^*(X_{H_C}+F_C) = X_{H_0}+F_0$
for 
$\Phi 
:= \Psi_C\circ(\var_C,S)\circ\Phi_j^{-1}\circ\Psi_0^{-1}$.
\item
The matching conditions
\begin{align}
    \label{e:mu_C^M}
    \mu_C^M &:= \var_C.\mu_0^M: TM\to T^*M\\
    \label{e:I_C}
    \I_C
    &:= S.\I_0\\
    \label{e:A_C}
    \A_C
    &:= \A_0 + \I_0^{-1}.B^{-1}.\tilde{\jmath}.\mu_0^P.\hpr_0\\
    \label{e:thm_MC_2}
    (X_2^C+F_2^C)\circ\Phi_C 
    &= T(\pr_2\circ\var_C).((X_1^j,X_2^{j}+F_2^{j})\circ\Phi_j)
\end{align}
hold.
\end{enumerate}
\end{theorem}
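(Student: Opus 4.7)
The plan is to reduce the identity $\Phi^*(X_{H_C}+F_C)=X_{H_0}+F_0$ to an equation on $\W$ by exploiting the defining factorization of $\Phi$: since $\Phi\circ\Psi_0\circ\Phi_j=\Psi_C\circ(\var_C,S)$, the identity is equivalent to
\[
    (\var_C,S)^*(X^C+F^C) \;=\; X^{jj}+F^{jj},
\]
where $X^C+F^C:=\Psi_C^*(X_{H_C}+F_C)$ and $X^{jj}+F^{jj}$ is described by Proposition~\ref{prop:Xtilde}. Proposition~\ref{prop:X^j}, applied to $H_C$ with trivial $j$-map (since $H_C$ is a pure Kaluza--Klein Hamiltonian with respect to $\A_C$), supplies the explicit form of $X^C$. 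I would then compare the two sides componentwise via the three projections $T\pr_1$, $T\pr_2$, $d\pr_3$ on $\W$.

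For the $T\pr_1$ component: since $\var_C$ is fiber-preserving over $P$, one has $T\pr_1\circ T(\var_C,S)^{-1}=T\pr_1$, and $F_1^C=0$ by \eqref{e:thm1}. Under \eqref{e:B-j-tilde} we have $\Psi_j=\Psi_0$ and $\tilde{\mathbb{I}}=\mathbb{I}_0$, so the equality reduces to
\[
 \hl^{\A_C}((\mu_C^M)^{-1}\pr_2\var_C(\om,\nu)) + \zeta_{\I_C^{-1}Sq}(\om) \;=\; \hl^{\A_0}((\mu_0^M)^{-1}\nu) + \zeta_{\I_0^{-1}(q-j\Psi_0^{\textup{hor}}(\om,\nu))}(\om).
\]
Rewriting $\hl^{\A_C}(u)=\hl^{\A_0}(u)-\zeta_{(\A_C-\A_0)(\hl^{\A_0}(u))}$ and separating $\A_0$-horizontal from $\zeta$-vertical parts, the horizontal match gives $(\mu_C^M)^{-1}\pr_2\var_C(\om,\nu)=(\mu_0^M)^{-1}\nu$, which is \eqref{e:mu_C^M}. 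The vertical match, split into its $q$-linear and $(\om,\nu)$-dependent pieces and combined with the identity $j(\Psi_0^{\textup{hor}}(\om,\nu))=B_\om^{-1}\tilde{\jmath}.\mu_0^P.\hl^{\A_0}((\mu_0^M)^{-1}\nu)$, yields \eqref{e:I_C} and \eqref{e:A_C} respectively.

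For the $d\pr_3$ component: the right-hand side vanishes by the controlled conservation law \eqref{e:ConsLawTilde}, while $X_3^C=0$ because $H_C$ carries no $j$-map, so the match reduces to $F_3^C=0$, i.e.\ $dJ.F_C=0$, which is consistent with assumptions \eqref{e:thm1}--\eqref{e:thm2}. For the $T\pr_2$ component, using \eqref{e:tilde2} together with \eqref{e:X_2^j} and \eqref{e:F_2^j} on the right, and the analogous expression for $X_2^C+F_2^C$ on the left, and then substituting the already derived matching conditions \eqref{e:mu_C^M}--\eqref{e:A_C}, the residual content reduces to precisely \eqref{e:thm_MC_2}. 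Since the whole argument is a componentwise equivalence, the direction (2)$\Rightarrow$(1) is obtained by running the decomposition in reverse.

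The main technical obstacle is the $T\pr_2$ step: the curvature-like terms collected in $L(W)$ of \eqref{e:L}, the mismatch between the Riemannian horizontal lifts of $\mu_0^M$ and $\mu_C^M$, and the force-derivative term involving $D_{\om}^0\circ B_{\om}\circ dj\circ T\Psi_0^{\textup{hor}}$ all have to be tracked carefully so that, once \eqref{e:mu_C^M}--\eqref{e:A_C} are imposed, the leftover constraint collapses cleanly to \eqref{e:thm_MC_2} without spawning hidden auxiliary equations. This is precisely the reason \eqref{e:thm_MC_2} remains implicit in general and is only made concrete in the semi-direct product setting of Theorem~\ref{thm:SD-matching}.
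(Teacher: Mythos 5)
Your proposal is correct and follows essentially the same route as the paper: factor $\Phi$ through $\W$, invoke Propositions~\ref{prop:X^j} and \ref{prop:Xtilde}, and compare the three components, with the first component splitting (horizontal vs.\ vertical, $q$-linear vs.\ $q$-independent parts, i.e.\ the paper's ``set $\nu=0$'' and ``set $q=0$'') into \eqref{e:mu_C^M}, \eqref{e:I_C}, \eqref{e:A_C}, and the third component vanishing by \eqref{e:ConsLawTilde}. The only slight overstatement is in your $T\pr_2$ step: no substitution of the already-derived conditions or tracking of curvature terms is needed there, since by \eqref{e:tilde2} the second component of the pulled-back equation is verbatim \eqref{e:thm_MC_2} --- which is precisely why that condition is left implicit in the theorem and only made explicit later in Theorem~\ref{thm:SD-matching}.
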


\begin{proof}
Using Proposition~\ref{prop:Xtilde}, 
it follows that item (1) is equivalent to 
$T\Phi_C.(X^{jj}+F^{jj}) 
= (T\var_C.(X_1^{jj},X_2^{jj}+F_2^{jj}), TS.0)
= (X^C+F^C)\circ\Phi_C = (X_1^C,X_2^C+F_2^C,0)\circ\Phi_C$,
which is hence equivalent to
$(X_1^C,X_2^C+F_2^C)\circ\Phi_C 
= T\var_C.((X_1^j,X_2^{j}+F_2^{j})\circ\Phi_j)$. 
This yields the equations 
\begin{equation}
    \label{e:proof_MC_1}
    X_1^C\circ(\var_C,S) = X_1^j\circ\Phi_j
\end{equation}
and \eqref{e:thm_MC_2}. 
Now, the former can be given explicitly as
\begin{align}
    \label{e:proofMC1}
     (\hl^{\mathcal{A}_C}_{\om})((\mu_C^M)^{-1}\mu)
    + \zeta_{\mathbb{I}_C^{-1}Sq}(\om)
    &=
    X_1^C(\om,\mu, S q)
    =
    X_1^j(\om,\nu,p)\\
    &=
    (\hl^{\mathcal{A}_0}_{\om})((\mu_0^M)^{-1}\nu)
    + \zeta_{\mathbb{I}_0^{-1}p}(\om)
        \notag
\end{align}
where $(\om,\mu, S q) = \Phi_C(\om,\nu,q)$ and $(\om,\nu,p) = \Phi_j(\om,\nu,q) = (\om,\nu,q-j\Psi_0^{\textup{hor}}(\om,\nu))$.

Equation~\eqref{e:proofMC1} necessitates $(\mu_0^M)^{-1}\nu = (\mu_C^M)^{-1}\mu = ((\mu_C^M)^{-1}\var_C)\nu$, since the fundamental vector fields are annihilated by the projection $T\pi: TP \to TM$, whence  \eqref{e:mu_C^M} follows.
Further, setting $\nu=0$, equality holds in \eqref{e:proofMC1} if, and only if, \eqref{e:I_C} holds. Setting $q=0$ and applying $\A_C$ to both sides of \eqref{e:proofMC1}, equality holds if, and only if,  \eqref{e:A_C} holds.
\end{proof}

\begin{remark}
Assume that $M$ is parallelizable such that $TM = M\times\mo$ and that 
\begin{equation}
    \label{e:inv-cond}
    T( D^0\circ (\id,B)).(X_1^j, 0)
    = 0.
\end{equation}
Let  $S: \gu^*\to\gu^*$ be an $\Ad(g)^*$-equivariant isomorphism and consider
\begin{align}
\label{e:MCvarC}
    &\var_C
    := 
    \id_P\times_M
    (\id_{T^*M} + D^0\circ(\pr_1,\tilde{\jmath}.\Psi_0^{\textup{hor}}))^{-1}:
    P\times_M T^*M  = P\times\mo\times\gu^* 
    \to 
    P\times\mo\times\gu^*.
\end{align}
Set $\Phi_C = (\var_C,S)$. 
Let $\mu_C^M$, $\I_C$ and $\A_C$ be defined by \eqref{e:mu_C^M}, \eqref{e:I_C} and \eqref{e:A_C}.
For
$(\om,\mu,\tilde{q}) = \Phi_C(\om,\nu,q)$ and $u = (\mu_0^M)^{-1}(\nu)$,
assume the matching conditions 
\begin{equation}
    \label{e:MCthm1}
    \Big(\hl_{\mu_0^M}^*\Big)_\nu\Big(u\Big)
    +\vl_{\nu}^*\Big<
        j.\Psi_0^{\textup{hor}}(\om,\nu),i(u)\CurvA
        \Big>
    =
    \Big(
    \hl_{\mu_C^M}^*
    \Big)_{\mu}\Big(u\Big)
\end{equation}
and
\begin{equation}
    \label{e:MCthm2}
    d^{\textup{hor}}H^j(\om,\nu,q-j.\Psi_0^{\textup{hor}}(\om,\nu))
    =
    d^{\textup{hor}}H^C(\om,\mu,\tilde{q})
\end{equation}
hold, where $H^C = \Psi_C^*H_C$.
The corresponding force is then given by 
\begin{equation}
    \label{e:F_2^C}
    F_2^C(\om,\mu,\tilde{q})
    :=
    \vl_\mu^*\Big(
    \vv<
        \tilde{q},
        i(\tilde{u})\Curv^{\mathcal{A}_C}   
        - (S^{-1})^*i(\tilde{u})\CurvA>
     \Big)
\end{equation}
where $(\om,\mu,\tilde{q})\in\W$ and $\tilde{u} = (\mu_C^M)^{-1}(\mu)$
and $H_C$ is the Kaluza-Klein Hamiltonian associated to $(\mu_C^M,\I_C,\A_C)$ and the (unchanged) potential function $V$. 
It follows that $H_C$, $F_C := (\Psi_C^{-1})^*(0,F^C_2,0)$ and $\Phi := \Psi_C\circ(\var_C,S)\circ\Phi_j^{-1}\circ\Psi_0^{-1}$ satisfy 
\begin{align}
    & T\tau.F_C = 0\\
    & F_C|J^{-1}(0) = 0\\
    \label{e:thm3}
    & \Phi^*(X_{H_C}+F_C) = X_{H_0}+F_0
\end{align}
Indeed, 
fix $(\om,\nu,q)\in\W$ and let $(\om,\nu,p) = (\om,\nu,q-j.\Psi_0^{\textup{hor}}(\om,\nu)) = \Phi_j(\om,\nu,q)$. 
Because of assumption \eqref{e:inv-cond}, we have
\begin{align*}
    F_2^j(\om,\nu,p)
    &=
    \vl_{\nu^*}\Big(
    D_{\om}^0.B_{\om}.d(j\circ\Psi_0^{\textup{hor}}).(X_1^j,X_2^j)(\om,\nu,p)
    \Big)\\
    &=
    T\Big(D^0\circ(\id,B)\Big)
    .\Big(0, T(j\circ\Psi_0^{\textup{hor}}).(X_1^j,X_2^j)\Big)
    (\om,\nu,p)
    \\
    &=
    T\Big(D^0\circ(\id,B)\Big)
    .\Big(X_1^j, T(j\circ\Psi_0^{\textup{hor}}).(X_1^j,X_2^j)\Big)
    (\om,\nu,p)\\
    &= 
    T\Big(
    D^0\circ(\id,B)\circ(\pr_1,j\circ\Psi_0^{\textup{hor}})
    \Big).\Big(X_1^j,X_2^j\Big)
    (\om,\nu,p)\\
    &= 
    T\Big(
        D^0\circ(\pr_1,\tilde{\jmath}\circ\Psi_0^{\textup{hor}})
    \Big).\Big(X_1^j,X_2^j\Big)
    (\om,\nu,p)
\end{align*}
Therefore,
$
T\var_C.((X_1^j,X_2^{j}+F_2^{j})\circ\Phi_j)
= T\var_C.T\var_C^{-1}.((X_1^j,X_2^{j})\circ\Phi_j)
= (X_1^j,X_2^{j})\circ\Phi_j
$
and we obtain the conditions \eqref{e:proofMC1} (which is equivalent to \eqref{e:mu_C^M}, \eqref{e:I_C} and \eqref{e:A_C}) 
and \eqref{e:MCthm1}, \eqref{e:MCthm2}. 

If $S$ can be chosen so that $F^C = 0$, then \eqref{e:Feom1KK} is equivalent to the Hamiltonian system $X_{H_C}$.  

The condition~\eqref{e:inv-cond} is rather strong and not satisfied for the case of Section~\ref{sec:SD}. Nevertheless, formula~\eqref{e:MCvarC} coincides with \eqref{e:SD-varC}. 
\end{remark}

\begin{remark}
The expression~\eqref{e:A_C} should be compared with Assumption~$M1$ in \cite{BLM01a}.
\end{remark}

\begin{remark}
The property $F_C|J^{-1}(0) = 0$ is relevant, because it means that the restriction to $J^{-1}(0)$ allows to reduce the controlled equations to a Hamiltonian system.  
\end{remark}

\section{Feedback control of mechanical systems on semi-direct products}\label{sec:SD}

\subsection{Setup}\label{sec:SD-setup}
Assume that $M$ and $G$ are Lie groups and that there is a right representation $\rho: M\to \textup{Aut}(G)$. 
That is $\rho^{\psi\phi} = \rho^{\phi}\circ\rho^{\psi}$ for $\phi,\psi\in M$. 
Let 
\[
 P = M\circledS G
\]
denote the semi-direct product. 
The right multiplication in $P$ is given by 
\begin{equation}
    \label{e:R_rho}
    R_{\rho}^{(\phi,g)}(\psi,h)
    = (\psi \phi, \rho^{\phi}(h)g ) . 
\end{equation}
The action by $G$ on $M\circledS G = P$ given by right multiplication (on the second factor) makes 
\[
 G\hookto P \to M
\]
into a (right) principal bundle.
The induced representation of $M$ on $\gu$ is again denoted by $\rho: M\to\textup{Aut}(\gu)$. The contragredient representation is 
\[
  \rho_*^{\phi}
  = (\rho^{\phi^{-1}})^*,
\]
and similarly we have 
$\rho^u X = \dd{t}{}|_0\rho^{\textup{exp}(t u)}X$ and 
$\rho_*^u p 
= \dd{t}{}|_0\rho_*^{\textup{exp}(t u)} p
= -(\rho^u)^*p
$.

Let $\mu_0^M$ be a right invariant metric on $M$. In the right trivialization this is given by an isomorphism $\mu_0^M: \mo\to\mo^*$, where $\mo$ is the Lie algebra of $M$ and $\mo^*$ its dual. 
Let $\I_0: \gu\to\gu^*$ a symmetric isomorphism such that 
\begin{align}
\label{e:I_0-inv}
    \Ad(g)^*\I_0\Ad(g) &= \I_0 
    = \rho_*^{\phi^{-1}}\I_0\rho^{\phi}
\end{align}
for all $(\phi,g)\in P$. 
Let $\A_0: TP\to\gu$ be a principal bundle connection. In the right trivialization this is given by 
\begin{equation}
    \label{e:SD-MC}
    \mathcal{A}_0: M\times G \times\mo\times\gu\to\gu,
    \quad
    (\phi,g,u,X) \mapsto \Ad(g^{-1})\rho^{\phi}(X + A_0(u))
\end{equation}
where $A_0: \mo\to\gu$ is independent of $(\phi,g)$.
The corresponding horizontal bundle is
\begin{equation}
    \hor_0
    =
    P\times\{(u,-A_0(u)): u\in\mo\}. 
\end{equation}
Let 
\begin{align*}
 \mu_0^P &= \mu^{KK}(\mu_0^M,\I_0,\A_0)
 =
 \left(
 \begin{matrix}
  \mu_0^M + A_0^*\I_0 A_0  &  A_0^*\I_0\\
  \I_0 A_0  &   \I_0 
 \end{matrix}
 \right)
 :
 \mo\times\gu\to\mo^*\times\gu^*
\end{align*}
be the (right invariant) Kaluza-Klein metric on $P$ corresponding to $(\mu_0^M,\I_0,\A_0)$. This metric has the properties
\begin{align*}
    \mu_0^P|\hor_0 = \mu_0^M\circ(T\pi,T\pi),\quad
    \mu_0^P|\ver = \I\circ(\A,\A),\quad 
    \mu_0^P(\hor_0,\ver) = 0.
\end{align*}


The Hamiltonian $H_0: P\times\po^*\to\R$ is 
\begin{equation}
    \label{e:SD-Ham}
    H_0(\om,\Pi)
    = \by{1}{2}\vv<\Pi,(\mu_0^P)^{-1}\Pi>.
\end{equation}
The momentum map with respect to the cotangent lifted $G$-action on $T^*P$ is denoted by $J: P\times\po^* \to\gu^*$. 
It is given by 
\begin{equation}
    \label{e:SD-momap}
    J(\om,\Pi)
    = \Ad(g)^*\rho_*^{\phi}(q)
\end{equation}
where $\om = (\phi,g)$ and $\Pi = (\nu,q)$.
The connection dependent isomorphism $\Psi_0 = \Psi_0^{\textup{hor}}\oplus\Psi_0^{\textup{ver}}: \W = P\times_M T^*M \oplus P\times\gu^* \to T^*P$ is, in the right trivialization, given by 
\[
 \Psi_0^{\textup{hor}}: 
 M\times G \times \mo^*\to P\times\po^*,\quad
 (\phi,g,\nu)\mapsto (\phi,g,\nu,0)
\]
and
\[
 \Psi_0^{\textup{ver}}: 
 M\times G\times \gu^* \to P\times\po^*
 ,\quad
 (\phi,g,p)
 \mapsto
 (\phi,g,
 A_0^*\rho_*^{\phi^{-1}}\Ad(g^{-1})^*p,
 \rho_*^{\phi^{-1}}\Ad(g^{-1})^*p). 
\]

\subsection{Control connection and symmetry actuation}\label{sec:SD-Gamma}
The control connection $\Gamma$ on the $G$-bundle  $P\to M$ is the direct product connection 
given by 
\begin{equation}
    \label{e:SD-Gamma}
    \A_{\Gamma}: 
    M\times G\times\mo\times\gu\to\gu,
    \quad
    (\phi,g,u,X)
    \mapsto
    \Ad(g^{-1})\rho^{\phi}X.
\end{equation}
Thus $\hor_{\Gamma} = P\times\mo\times\set{0}$. 
 
Let $\tilde{\jmath}: \hor^* = P\times\mo^*\to\gu^*$ be a $P$-equivariant and fiber-linear map. That is, 
\[
\tilde{\jmath}(\phi,g,\nu) = \Ad(g)^*\rho_*^{\phi}C\nu
\]
for a linear map
\begin{equation}
    \label{e:SD-C}
    C: \mo^*\to\gu^*.
\end{equation}
Since $\hpr_{\Gamma}^* = \pr_1$, definition~\eqref{e:B-j-tilde} becomes
\begin{align}
    \label{e:SD-B}
    B_{\om}q
    &=
    q + 
    \Ad(g)^*\rho_*^{\phi}C A_0^*\Big((\Ad(g)^*\rho_*^{\phi})^{-1}q\Big)\\
    &=
    \Ad(g)^*\rho_*^{\phi}
    \Big(
    1 + C A_0^*
    \Big)
    \rho_*^{\phi^{-1}}\Ad(g^{-1})^*q
    .\notag
\end{align}
The map $j: P\times\po^*\to\gu^*$ in Definition~\ref{def:SymCloLoop} is now
\begin{align}
\label{e:SD-j}
    j(\om,\Pi) 
    &= j(\phi,g,\nu,p)
    =
    B_{\om}^{-1}\Ad(g)^*\rho_*^{\phi}C \nu
    + (B_{\om}^{-1}-1)\Ad(g)^*\rho_*^{\phi}q \\
    &=
    \Ad(g)^*\rho_*^{\phi}\Big(
        (1+C A_0^*)^{-1} C \nu
        - C A_0^*(1 + C A_0^*)^{-1}q
    \Big).
    \notag
\end{align}
Further, the map $D_{\om}^0: \gu^*\to\mo^*$, defined in \eqref{e:D^0}, is given by 
\begin{equation}
    \label{e:SD-D^0}
    D^0_{\om}
    = 
    \Big(
        \pr_2.(\hl^{\mathcal{A}_0}).\hpr_{\Gamma}^*.\Psi_0^{\textup{ver}}
    \Big)_{\om}
    =
    A_0^*\rho_*^{\phi^{-1}}\Ad(g^{-1})^*
\end{equation}
where $\om = (\phi,g)$.

\subsection{Matching}
The strategy is to use item (2) of Theorem~\ref{thm:MC} to find the isomorphism $\Phi_C = (\var_C,S)$ and the force $F_C$. More precisely, $(\mu_C^M,\I_C,\A_C)$ are given by \eqref{e:mu_C^M}, \eqref{e:I_C}, \eqref{e:A_C}, and $(\var_C,S)$ and $F_2^C$ shall be determined from \eqref{e:thm_MC_2} such that \eqref{e:thm1} and \eqref{e:thm2} hold. 

Since the controlled system should again be right invariant, $\Phi_C$ should be $P$-equivariant (not only $G$-equivariant).
It follows that $\var_C = (\id_P,\var_C): P\times_M T^*M\to P\times_M T^*M = M\times G\times\mo^*$ for a $P$-independent map $\var_C: \mo^*\to\mo^*$ (again denoted by the same symbol).

The right-trivialization further implies that $X^j$ of Proposition~\ref{prop:X^j} can be expressed as $X^j = (X^j_1, X^j_2, X^j_3) \in \po\times\mo^*\times\gu^*$. 

Because the operator $\mu_0^M: \mo\to\mo^*$ corresponds to a right invariant metric on $M$, the Riemannian horizontal lift $\hl_{\mu_0^M}: TM\oplus TM\to TTM$ can be expressed as (e.g., \cite{Michor06})
\begin{equation}
    \label{e:SD-hl}
    \hl_{\mu_0^M}: \mo\times\mo\to\mo,
    \quad
    (u,w)
    \mapsto
    -\by{1}{2}\Big(\ad(u)^{\top}w + \ad(w)^{\top}u - \ad(w)u\Big)
\end{equation}
with dual
\begin{equation}
    \label{e:SD-hl*}
    \hl_{\mu_0^M}^*: \mo^*\times\mo\to\mo^*,
    \quad
    (\nu,w)
    \mapsto
    -\by{1}{2}
    \Big(\ad((\mu_0^M)^{-1}\nu)^*w + \ad(w)^*\nu - \ad(w)((\mu_0^M)^{-1}\nu) \Big)
\end{equation}
where the last equation follows since the connector \eqref{app:K} satisfies, by definition \eqref{app:hor}, $K^*\circ T\mu_0^M = \mu_0^M\circ K$. Hence, $u=(\mu_0^M)^{-1}\nu$ implies $(\hl_{\mu_0^M}^*)_{\nu}(u) = -\ad(u)^*\nu$.  
 
Fix $(\om,\nu,q) = (\phi,g,\nu,q)\in\W$ and let $(\om,\nu,p) = \Phi_j(\om,\nu,q)$. That is,
\begin{equation}
    \label{e:SD-p}
    p
    = q  - j\Psi_0^{\textup{hor}}(\om,\nu)
    = q - \Ad(g)^*\rho_*^{\phi}(1+CA_0^*)^{-1}C\nu .
\end{equation}
Proposition~\ref{prop:X^j} yields
\begin{equation}
    \label{e:SD-X_1^j}
    X_1^j(\om,\nu,p)
    = 
    \Big( (\mu_0^M)^{-1}\nu, -A_0u + \I_0^{-1}\rho_*^{\phi^{-1}}\Ad(g^{-1})^*p
    \Big)
    =:
    \Big( u, Y \Big)
    \in\po
    = \mo\times\gu 
\end{equation}
and 
\begin{equation}
    \label{e:SD-X_2^j}
    X_2^j(\om,\nu,p)
    = 
    -\ad(u)^*\nu
    - \vv<\rho_*^{\phi^{-1}}\Ad(g^{-1})^*p,
        \Curv_e^{\mathcal{A}_0}(u)>
    \in\mo^*.
\end{equation}
We use the diamond notation $\vv<X\diamond p,u> = \vv<p,\rho^u X>$ and, with $\tilde{p} = \rho_*^{\phi^{-1}}\Ad(g^{-1})^*p$, express the curvature term as
\begin{align}
    \label{e:SD-curvA}
    \vv<\tilde{p},
    \Curv_e^{\mathcal{A}_0}(u)>
    &=
    \ad(u)^*A_0^*\tilde{p}
    + A_0^*\ad(A_0u)^*\tilde{p}
    - A_0u\diamond \tilde{p}
    - A_0^*\rho_*^{u}\tilde{p}. 
\end{align}
Further, again by Proposition~\ref{prop:X^j} and equivariance of $j\circ\Psi_0^{\textup{hor}}$,
\begin{align}
\notag
    F_2^j(\om,\nu,p)
    &=
    D_{\om}^0B_{\om}d(j\circ\Psi_0^{\textup{hor}})(X_1^j,X_2^j)(\om,\nu,p) \\
    &=
    A_0^*(1+CA_0^*)\Big(
        (\ad(Y)^*+\rho_*^u)(1+CA_0^*)^{-1}C\nu 
        + (1+CA_0^*)^{-1}C X_2^j(\om,\nu,p)
        \Big)
        \label{e:SD-F_2^j-CX}\\
    &=
    A_0^*CX_2^j(\om,\nu,p)
    + 
    A_0^*(1+CA_0^*)
        (\ad(Y)^*+\rho_*^u)(1+CA_0^*)^{-1}C\nu .
        \notag
\end{align}
Notice that $\ad(A_0u)^*\tilde{p} = -\ad(\I_0^{-1}\tilde{p} - A_0u)^*\tilde{p} = -\ad(Y)^*\tilde{p}$.
With $\tilde{q} = \rho_*^{\phi^{-1}}\Ad(g^{-1})^*q$, equation~\eqref{e:SD-p} may be written as $(1+CA_0^*)^{-1}C\nu = \tilde{q} - \tilde{p}$.
Therefore,
\begin{align*}
    X_2^j(\om,\nu,p) + F_2^j(\om,\nu,p)
    &=
    X_2^j(\om,\nu,p)
    + A_0^*C X_2^j(\om,\nu,p)
    + (1+A_0^*C)A_0^*(\ad(Y)^*+\rho_*^u)(\tilde{q} - \tilde{p})\\
    &=
    (1+A_0^*C)\Big(
     -\ad(u)^*\nu
     -\ad(u)^*(A_0^*\tilde{p})
     + A_0 u\diamond \tilde{p}
     \Big) \\
    &\phantom{===}
    + (1+A_0^*C)A_0^*(\ad(Y)^*+\rho_*^u)\tilde{q}\\
    &=
    (1+A_0^*C)
    \Big(
        - \ad(u)^*((1+A_0^*C)^{-1}\nu)
        - \ad(u)^*(A_0^*\tilde{q})\\
        &\phantom{===}
        + A_0u\diamond \tilde{q}
        - A_0u\diamond (1+CA_0^*)^{-1}C\nu 
        + A_0^*(\ad(Y)^*+\rho_*^u)\tilde{q}
    \Big)\\
    &=
    (1+A_0^*C)
    \Big(
        - \ad(u)^*\mu 
        - \vv<\tilde{q},\Curv_e^{\mathcal{A}_0}(u)>
        - A_0u\diamond (1+CA_0^*)^{-1}C\nu 
    \Big)\\
    &=
    (1+A_0^*C)
    \Big(
        - \ad(u)^*\mu 
        - \vv<S\tilde{q},\Curv_e^{\mathcal{A}_C}(u)> \\
    &\phantom{===}
        + F_2^C(\om,\mu,S\tilde{q})
        - A_0u\diamond (1+CA_0^*)^{-1}C\nu 
    \Big)\\
    &=
    (1+A_0^*C)
    \Big(
    X_2^C(\om,\mu,S\tilde{q}) 
    + F_2^C(\om,\mu,S\tilde{q})
    - A_0u\diamond (1+CA_0^*)^{-1}C\nu 
    \Big)
\end{align*}
whence the isomorphism $\var_C: \mo^*\to\mo^*$ is
\begin{align}
\label{e:SD-varC}
    \mu
    &=
    \var_C(\nu)
    :=
    (1+A_0^*C)^{-1}\nu,
\end{align}
the force is
\begin{equation}
    \label{e:SD-force}
     F_2^C(\om,\mu,\tilde{q})
     := 
    \vv<\tilde{q},\Curv_e^{\mathcal{A}_C}(u)>
    -
    \vv<S^{-1}\tilde{q},\Curv_e^{\mathcal{A}_0}(u)>
\end{equation}
and the matching condition is
\begin{equation}
    \label{e:SD-MCond}
     A_0u\diamond (1+CA_0^*)^{-1}C\nu 
     \overset{!}{=} 0.
\end{equation}
Indeed, this condition holds, if and only if
\begin{equation}
\label{e:SD-MCond2}
    (X_2^j+F_2^j)\circ\Phi_j
    = 
    \var_C^*(X_2^C + F_2^C).
\end{equation}
Moreover, the force $F_C = (\Psi_C^{-1})^*(0,F_2^C,0)$ satisfies \eqref{e:thm1} and \eqref{e:thm2}, because \eqref{e:SD-force} vanishes for $q = 0$. 

If $C: \mo^*\to\gu^*$ satisfies \eqref{e:SD-MCond}, then the controlled Hamiltonian follows from the explicit formulas \eqref{e:mu_C^M}, \eqref{e:I_C} and \eqref{e:A_C} with \eqref{e:SD-varC}. 
Write the controlled connection $\A_C$ as $\A_C(e,e,u,X) = X + A_C(u)$. 
Therefore: 

\begin{theorem}\label{thm:SD-matching}
Let $\var_C$ be given by \eqref{e:SD-varC}, let $S: \gu^*\to\gu^*$ be a $\rho_*(M)\times\Ad(G)^*$-equivariant isomorphism
and let 
$\Phi = \Psi_C\circ(\id_P, \var_C, S)\circ\Phi_j^{-1}\circ\Psi_0^{-1}$.
Let
$H_C$ be the Kaluza-Klein Hamiltonian associated to 
\begin{align}
\label{e:SD-mu_C^M}
    \mu_C^M
    &= (1+A_0^*C)^{-1}\mu_0^M\\
    \label{e:SD-I_C}
    \I_C
    &= S\I_0 \\
    \label{e:SD-A_C}
    A_C 
    &= A_0 + \I_0^{-1} C(1+A_0^*C)^{-1}  \mu_0^M.
\end{align}
Then 
$\Phi^*(X_{H_C} + F_C) = X_{H_0} + F_0$
holds, if and only if 
\eqref{e:SD-MCond} is satisfied.
Moreover, $F_C|J^{-1}(0) = 0$.
If $S$ can be chosen such that $F_2^C = 0$, then $F_C=0$. 
\end{theorem}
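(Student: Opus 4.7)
The plan is to invoke Theorem~\ref{thm:MC}, which reduces the equivalence $\Phi^*(X_{H_C}+F_C)=X_{H_0}+F_0$ to verifying the four matching conditions \eqref{e:mu_C^M}, \eqref{e:I_C}, \eqref{e:A_C}, \eqref{e:thm_MC_2}. In the semi-direct product right trivialization set up in Section~\ref{sec:SD-setup}, the first three unwind into the algebraic identities \eqref{e:SD-mu_C^M}, \eqref{e:SD-I_C}, \eqref{e:SD-A_C}, while \eqref{e:thm_MC_2} becomes the genuine matching constraint, which the long computation preceding the theorem already reduces to \eqref{e:SD-MCond}.

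For the first three identities I would argue as follows. Inserting $\var_C=(1+A_0^*C)^{-1}$ into \eqref{e:mu_C^M} gives directly \eqref{e:SD-mu_C^M}; \eqref{e:I_C} is tautological and yields \eqref{e:SD-I_C}; for \eqref{e:A_C} I evaluate $\I_0^{-1}B^{-1}\tilde{\jmath}\mu_0^P\hpr_0$ at the group identity. Using \eqref{e:SD-B} one has $B_e=1+CA_0^*$, and $\tilde{\jmath}_e=C$ by construction. A direct calculation on the horizontal lift $(u,-A_0u)\in\hor_0$ gives $\mu_0^P\hpr_0(u,X)=(\mu_0^M u,0)$. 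Combining and invoking the intertwining identity $(1+CA_0^*)^{-1}C=C(1+A_0^*C)^{-1}$ turns \eqref{e:A_C} into \eqref{e:SD-A_C}.

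The work for \eqref{e:thm_MC_2} has essentially been discharged in the display chain immediately above the theorem: expanding $X_2^j+F_2^j$ via \eqref{e:SD-X_2^j}, \eqref{e:SD-F_2^j-CX} and the curvature identity \eqref{e:SD-curvA} and using $(1+CA_0^*)^{-1}C\nu=\tilde q-\tilde p$ yields the factorization
\[
(X_2^j+F_2^j)\circ\Phi_j
=
(1+A_0^*C)\Bigl(X_2^C(\om,\mu,S\tilde q)+F_2^C(\om,\mu,S\tilde q)-A_0u\diamond(1+CA_0^*)^{-1}C\nu\Bigr),
\]
with $F_2^C$ forced to be \eqref{e:SD-force}. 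Since $(1+A_0^*C)$ is invertible and equals $\var_C^{-1}$, pulling back by $\var_C$ absorbs the prefactor and gives \eqref{e:thm_MC_2} if and only if the residual diamond term vanishes, i.e.\ precisely under \eqref{e:SD-MCond}. The main obstacle here is purely bookkeeping: keeping track of the right trivialization and confirming the duality identity $(1+CA_0^*)^{-1}C=C(1+A_0^*C)^{-1}$, together with the analogous rewriting $\ad(A_0u)^*\tilde p=-\ad(Y)^*\tilde p$, which drive the cancellations and isolate the obstruction term.

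Finally, $F_C|J^{-1}(0)=0$ follows because the isomorphism $\Psi_C$ identifies $J^{-1}(0)$ with $\{\tilde q=0\}$ in $\W$, and the formula \eqref{e:SD-force} for $F_2^C$ is linear in $\tilde q$ (both pairings are against $\tilde q$ or $S^{-1}\tilde q$), so $F_2^C$ vanishes there. The last assertion is immediate: if $S$ can be chosen so that $F_2^C=0$, then $F^C=(0,F_2^C,0)=0$ and hence $F_C=(\Psi_C^{-1})^*F^C=0$, upgrading the controlled system to a genuine Hamiltonian one.
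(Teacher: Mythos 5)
Your proposal is correct and follows essentially the same route as the paper: the theorem is obtained by invoking item (2) of Theorem~\ref{thm:MC}, unwinding \eqref{e:mu_C^M}--\eqref{e:A_C} in the right trivialization (your evaluation $\mu_0^P\hpr_0(u,X)=(\mu_0^M u,0)$ and the intertwining identity $(1+CA_0^*)^{-1}C=C(1+A_0^*C)^{-1}$ correctly recover \eqref{e:SD-A_C}), and reducing \eqref{e:thm_MC_2} to \eqref{e:SD-MCond} via the factorization $(X_2^j+F_2^j)\circ\Phi_j=(1+A_0^*C)(X_2^C+F_2^C-A_0u\diamond(1+CA_0^*)^{-1}C\nu)$ established in the display chain preceding the theorem. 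The remarks on $F_C|J^{-1}(0)=0$ (linearity of \eqref{e:SD-force} in $\tilde q$) and on $F_C=0$ when $F_2^C=0$ match the paper's argument.
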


\begin{remark}
Equations~\eqref{e:SD-mu_C^M} and \eqref{e:SD-A_C} have also been found in \cite{H19} by ad-hoc calculations. Now they follow from the general results in Theorem~\ref{thm:MC}, which also shows that the form of these equations is necessary, because of the requirements on $\Phi$ and the assumption that $H_C$ is of Kaluza-Klein form. 
\end{remark}

\begin{remark}
If the representation $\rho$ is trivial, so that $P = M\times G$ is a direct product of Lie groups, then the condition \eqref{e:SD-MCond} is empty. 
\end{remark}

\subsection{Equilibria}\label{sec:SD-equil}
Assume $\nu_e\in\mo^*$ is an equilibrium of the uncontrolled equation of motion in $M$. That is, 
\begin{equation}
\label{e:SD-equ1}
 \ad((\mu_0^M)^{-1}\nu_e).\nu_e
 = 0
\end{equation}
or, equivalently, $X_2^j(\om,\nu_e,0) = 0$ where $X_2^j$ is given in \eqref{e:SD-X_2^j}. Suppose that the matching condition \eqref{e:SD-MCond} holds and that $S$ has been found such that $F_C = 0$. 
Equation~\eqref{e:SD-MCond2} with $\om=(\phi,g)$ and $q=0$ implies that $\nu_e$ is also an equilibrium of
\begin{equation}
\label{e:SD-equ2}
 \ad((\mu_C^M)^{-1}\nu_e).\nu_e
 = 0
\end{equation}
if $(X_2^j+F_2^j)(\om,\nu_e,-j\Psi_0^{\textup{hor}}(\om,\nu_e)) = 0$. Because of equations~\eqref{e:SD-X_2^j} and \eqref{e:SD-F_2^j-CX}, a sufficient condition for this to hold is that 
\begin{equation}
    \label{e:SD-Cis0}
    C\nu_e = 0.
\end{equation}
Hence, if this condition holds, one can use (non-linear) stability analysis of \eqref{e:SD-equ2} for \eqref{e:SD-equ1}.

\section{Satellite with a rotor}\label{sec:SAT}
The feedback control of the satellite with a rotor is studied in \cite{BKMS92,BMS97,MarsdenPCL}. A Lie-Poisson version of the present approach (focusing on on the momentum shift factor $1-k$) is given in \cite{H19}. See also Remarks~1.1 and 1.3 in loc.\ cit.

\subsection{Setup}\label{sec:SAT-setup}
The configuration space of the satellite with a rotor attached to the third principal axis is a direct product Lie group $P = M\times G = \SO(3)\times S^1$. 
Let $I_1 > I_2 > I_3$ be the rigid body moments of inertia and $i_1 = i_2 > i_3$ those of the rotor. We use left multiplication in the direct product group $P$ to write the tangent bundle $TP \cong P\times\mathfrak{so}(3)\times\R \cong P\times\R^4$.
We identify $\mo = \so(3) = \R^3$.


Let  $\lam_j = I_j+i_j$ for $j=1,2,3$,
$\mu_0^M = \textup{diag}(\lam_1,\lam_2,I_3)$ and $\I_0 = i_3$.
The mechanical connection is 
\begin{equation}
    A_0: M\times\R^3\to \gu = \R,\;
    (\phi,\Om) \mapsto \Om_3.
\end{equation}
The associated (local) curvature form is denoted by $\CurvA = dA_0$. This is a two-form on $M$ and is given by 
\begin{equation}\label{e:K0}
    i(\Omega)\CurvA 
    =
    (
    \Om_2,
    -\Om_1,
    0
    ).
\end{equation}
The Kaluza-Klein metric associated to $(\mu_0^M,\I_0,A_0)$ is 
\begin{equation}
 \mu^P_0
 =
    \left(\begin{matrix}
    \lam_1 & & & \\
     & \lam_2 & & \\
     & & \lam_3 & i_3 \\
     & & i_3 & i_3
    \end{matrix}\right)
\end{equation}
with corresponding kinetic energy Hamiltonian $H_0$. 
The Hamiltonian system $X_{H_0}$ is invariant under the $S^1$-action on the second factor with momentum map $J: T^*P = P\times\R^3\times\R\to\R$, $(\om,\Pi,p)\mapsto p$. (It is also invariant under the left multiplication in $\SO(3)$, but this symmetry is not needed at this point.)

\subsection{The control connection}\label{sec:SAT-Gamma}
The control connection $\Gamma$ is 
\begin{equation}
    \label{e:SAT-Gamma}
    \Gamma:  
    \SO(3)\times S^1\times\R^3\times\R\to\R,
    \quad
    (\phi,g,\Om,X)
    \mapsto
    X.
\end{equation}
Thus $\hor_{\Gamma} = P\times\R^3\times\set{0}$. 
The maps $j$ and $B$ are defined by the formulas \eqref{e:SD-j} and \eqref{e:SD-B}. Hence the symmetry actuating force $F_0$ is given by \eqref{e:defSymAct} and $J+j$ is a conserved quantity for solutions of $X_{H_0}+F_0$. 

\subsection{Matching}\label{sec:SAT-matching}
Since $P$ is now a direct product, the matching condition \eqref{e:SD-MCond} is void and any linear map 
$
 \tilde{\jmath}_e = C: \mo^*=\R^3 \to \gu^* =\R
$
is admissible in the sense of Theorem~\ref{thm:SD-matching}. 
We choose
\begin{equation}
    \label{e:SAT-C}
    C(\Pi_1,\Pi_2,\Pi_3)
    = -k\Pi_3
\end{equation}
for a (small) parameter $k = -\gamma$. 

\begin{remark}
Because of equation~\eqref{e:SD-Cis0}, it follows that the corresponding force $F_0$ is potentially suitable to stabilize the unstable equilibrium $\Pi_e = (0,1,0)$ (rotation of the satellite about the middle axis). 
Indeed, \cite[Prop.~3.1]{BMS97} show that this control stabilizes $\Pi_e$ for $1>k>1-I_3/\lam_2$.
\end{remark}

Theorem~\ref{thm:SD-matching} yields the controlled data
\begin{align}
    \mu_C^M
    &=
    \left(\begin{matrix}
    \lam_1 & &  \\
     & \lam_2 & \\
     & & (1-k)^{-1}I_3 
    \end{matrix}\right) 
    \\
    \I_C 
    &=
    f_k^{-1}\I_0\\
    A_C
    &=
    f_k A_0
\end{align}
where
\begin{equation}
\label{e:SAT-f_k}
    f_k
    := 
    \frac{\mathbb{I}_0-k\lam_3}{\mathbb{I}_0(1-k)}
\end{equation}
follows from \eqref{e:SD-A_C} and the choice for $\I_C$ is such that $F_2^C$ in \eqref{e:SD-force} vanishes.

Let $H_C: T^*P\to\R$ be the Kaluza-Klein Hamiltonian specified by these formulas. Then we have shown that 
\begin{equation}
    \label{e:SAT-equiv}
    \Phi^*X_{H_C}
    = X_{H_0} + F_0
\end{equation}
where 
$\Phi 
= 
\Psi_C\circ(\id,\var_C,f_k^{-1})\circ\Phi_j^{-1}\circ\Psi_0^{-1}$. 

That is, the closed-loop equations associated to the conserved quantity $(J+j)(\om,\Pi,p)$, that is due to exerting the force $F_0$, are equivalent to the Hamiltonian system $X_{H_C}$. Explicitly, with Definition~\ref{def:SymCloLoop},
\begin{equation}\label{e:SAT-shift}
    (J+j)(\om,\Pi,p)
    = 
    (1+C\hpr_{\Gamma}^*)^{-1}(C\Pi + p)
    = 
    (1-k)^{-1}(-k\Pi_3 + p)
\end{equation}

\begin{remark}\label{rem:SATrem1}
Equation~\eqref{e:SAT-f_k} coincides with \cite[(1.27)]{H19}.
\end{remark}

\begin{remark}\label{rem:SATrem2}
The $1-k$ a posteriori shift that appears in \cite{BKMS92,BMS97,MarsdenPCL} has been included in the general construction and therefore is now a part of the formula \eqref{e:SAT-shift} for the conserved quantity. 
The shift is explicitly mentioned in the sentence below \cite[Equ.~(3.7)]{BMS97}.
The above construction is also different from \cite[Section~4]{H19}, where another relation between $A_0$ and $C$ was given. Both relations lead to the conclusion \eqref{e:SAT-equiv}, albeit with different formulas for $\I_C$. The ad-hoc approach of \cite[Section~4]{H19} corresponds to setting $B=1$ in Definition~\ref{def:SymCloLoop}. However, this means that equation \eqref{e:j-ver-0} need no longer hold. Since this equation is crucial when dealing with more general group actions, such as non-abelian $G$ or semi-direct $P$, we prefer definition \eqref{e:B-j-tilde} for $B$.
See Section~\ref{sec:CON-B=1}. 
\end{remark}

\begin{remark}\label{rem:SATrem3}
Remark~1.1 in \cite{H19} can now be rephrased as follows: 
The controlled equation of motion corresponding to the feedback law \eqref{e:SAT-shift} are obtained by replacing the uncontrolled equations of motion
$\dot{\nu} 
 = \ad((\mu_0^M)^{-1}\nu)^*\nu 
 - \vv<p, i((\mu_0^M)^{-1}\nu)\Curv^{A_0}>$
with
$\dot{\nu} 
 = \ad((\mu_C^M)^{-1}\nu)^*\nu 
 - \vv<p_k, i((\mu_C^M)^{-1}\nu)\Curv^{A_C}>$, 
where $p_k = (J+j)(\om,\Pi,p)$. 
\end{remark}


\section{Control of ideal fluids subject to an external Yang-Mills field}\label{sec:YM}

This section is concerned with ideal incomressible flow of charged particles in the presence of an external Yang-Mills field $\M = \textup{Curv}^{A_0} = dA_0 + \by{1}{2}[A_0,A_0]$.
Gauge symmetry yields conservation of charge along the fluid flow. 
In the language  of the previous sections, the symmetry actuating force is assumed to work on the charge variables, but not on the (positions or momenta of the) fluid particles. 
If the symmetry group $K$ is abelian, then $\M = dA_0$ is a magnetic field. 
See \cite{GBTV13,GHK83} for further background and \cite{H19} for the Lie-Poisson version of this example.

\subsection{Setup}\label{sec:YM-setup}
Let $K$ be a finite dimensional Lie group with Lie algebra $\ko$ and $Q$ a compact domain, possibly with boundary,  in $\R^n$. Consider the trivial principal bundle $S := Q\times K \to Q$ where the principal bundle action is given by right multiplication in the group. Let $\mu_0^Q = \vv<.,.>$ denote the induced Euclidean metric on $Q\subset\R^n$ and $\I_0$ a symmetric positive definite bilinear form on $\ko$ which is $\Ad(k)$-invariant. We fix a connection form $A_0: TQ\to\ko$. The horizontal space $\textup{Hor}_0\subset TS = Q\times K\times\R^n\times\ko$ is thus $\textup{Hor}_0 = \set{(u,k,u_x,X): X+A_0(x)u_x = 0}$.
Denote the  Kaluza-Klein metric on $Q\times K$ associated to $(\mu_0^Q,\I_0,A_0)$ 
by
\begin{equation}
    \label{e:mu0}
    \mu_0^S = \mu^{KK}(\mu_0^Q,\I_0,A_0).
\end{equation}
For $k\in K$, let $r^k: S\to S$, $(x,g)\mapsto(x,gk)$ denote the principal right action. Consider the volume preserving automorphisms  
\[
 \textup{Aut}_0(S)
 := \set{\Phi\in\textup{Diff}(S):
 \Phi\circ r^k = r^k\circ\Phi \; \forall k\in K \And \Phi^*\textup{vol}_S=\textup{vol}_S} 
\]
which can be identified as 
\[
 \Aut_0(S) = \textup{Diff}_0(Q)\circledS\F(Q,K) 
 \]
where $\textup{Diff}_0(Q)$ is the set of $\textup{vol}_Q$-preserving diffeomorphisms and
$\F(Q,K)$ denotes functions from $Q$ to $K$ (of a fixed differentiability class which we do not specify). See  \cite{GBTV13}.
Composition from the right gives rise to a right  representation 
\[
 \rho: \Diff_0(S) \to \textup{Aut}(\F(Q,K)),\quad
 \phi\mapsto \rho^{\phi}
\]
where $\rho^{\phi}(g) = g\circ\phi$. 
 
Consider the action by point-wise right multiplication $R^g: \F(Q,K)\to\F(Q,K)$, $h\mapsto hg$.
The induced semi-direct right action on $\Aut_0(S)$ is again denoted by $R$:
\begin{equation}\label{e:s-d}
 R^{(\psi,g)}(\phi,h) = (\phi\cdot\psi,R^g(\rho^{\psi}(h))). 
\end{equation}
In particular, $\Aut_0(S)\to\Aut_0(S)/\F(Q,K) = \Diff_0(S)$ is a right principal $\F(Q,K)$ bundle. 
We also consider the right trivializations
\begin{align}
    \label{e:triv}
    T\Aut_0(S) &
    \cong \Aut_0(S)\times\ao 
    = \Diff_0(S)\times\mathfrak{d}
        \times\F(Q,K)\times\gu
    \\
    \notag
    T^*\Aut_0(S) 
    &\cong \Aut_0(S)\times\ao^* 
    = \Diff_0(S)\times\mathfrak{d}^*
        \times\F(Q,K)\times\gu^* 
\end{align}
using the right multiplication in $\Aut_0(S)$, where $\ao = T_e\Aut_0(S)$, 
$\mathfrak{d} = T_e\Diff_0(S) = \X_0(Q)$ are divergence free vector fields tangent to the boundary  
and $\gu=T_e\F(Q,K) = \F(Q,\ko)$. Here, $\ao^*$, $\mathfrak{d}^*$ and $\gu^*$ denote the smooth part of the dual.

We have $\mathfrak{d}^* = \Om^1(Q)/d\mathcal{F}(Q)$ and $\gu^* = \F(Q,\ko^*)$. The pairings are given by 
\begin{align*}
    &\mathfrak{d}^*\times\mathfrak{d}\to\R,\quad ([\Pi],u)\mapsto\int_Q\vv<\Pi_x,u_x>\,dx\\
    &\gu^*\times\gu\to\R,\quad (q,X)\mapsto\int_Q\vv<q_x,X_x>\,dx
\end{align*}
where $[\Pi]$ is the class of $\Pi\in\Om^1(Q)$.
By definition, the smooth duals are the isomorphic images of the maps
\[
 [\mu_0^Q]: \mathfrak{d}\to\mathfrak{d}^*,\quad
  u \mapsto [\mu_0^Q(u)]
\]
and 
$
 \I_0: \gu\to\gu^*
$,
$q\mapsto\vv<\I_0 q,\_>$
where $[\mu_0^Q(u)]$ is the class of $\mu_0^Q(u)\in\Om^1(Q)$ in $\Om^1(Q)/d\mathcal{F}(Q) = \mathfrak{d}^*$ and $\I_0$ is independent of $q\in Q$. The inverse is of $[\mu_0^Q]$ is
\[
 [\mu_0^Q]^{-1}: [\Pi]\mapsto \mathcal{P}((\mu_0^Q)^{-1}(\Pi))
\]
where $\Pi$ is a representative of $[\Pi]$ and $\mathcal{P}$ is the Helmholtz-Hodge-Leray projection.
For a vector field $u\in\X(Q)$ the Helmholtz-Hodge-Leray projection is divergence free, tangent to the boundary, and given by $\mathcal{P}(u) = u - \nabla g$ where $g$ is determined by $\Delta g = \textup{div}\,u$ with Neumann boundary conditions.


In this section all duals are in the smooth sence such that $[\mu_0^Q]$ and $\I_0$ are isomorphisms. Further, the dual to $A_0: \doo\to\gu$ is $[A_0^*]: \gu^*\to\doo^*$, $q\mapsto[A_0^* q]$, where $A_0^*: \ko^*\to T^*Q$ is the point-wise dual and $[A_0^*q]\in\gu^*$   is the class of $A_0^*q\in\Om^1(Q)$ in $\Om^1(Q)/d\F(Q)$. 

The representation $\rho$ gives rise to an infinitesimal representations
$\rho^{\phi}X$ and 
$\rho^u(X) = dX.u = L_u X = \nabla_u X$ with $\phi\in\Diff_0(S)$, $u\in\mathfrak{d}$ and $X\in\gu$. The corresponding coadjoint representations are given by $\rho^{\phi}(q) = (\rho^{\phi^{-1}})^*(q)$ and $\rho^u(q) = (\rho^{-u})^*(q)$ with $q\in\gu^*$.

We define the bracket $[.,.]$ on $\mathfrak{d}$ (and similarly for $\ao$) to be the negative of the usual Lie bracket:
$
 [u,u] := -\nabla_{u}v + \nabla_{v}u
$
where $\nabla_{u}v=\vv<u,\nabla>v$ and $u,v\in\mathfrak{d}$. This choice of sign is compatible with \cite{A66,AK98}. 
Further, we define the operator $\ad(u).v = [u,v]$. Its dual is $\ad(u)^*[\Pi] = [\Pi\circ\ad(u)]$ for $[\Pi]\in\mathfrak{d}^*$. 

\begin{remark}[From finite to infinite dimensions]
\label{rem:YM-infdim}
The roles of the groups $P, M, G$ of Section~\ref{sec:SD} are now played, respectively, by $\Aut_0(S), \Diff_0(Q), \F(Q,K)$. 
In the above sections all spaces were assumed to be finite dimensional. However, all  calculations and formulas were global and carry over, in principle, to the infinite dimensional case. The only exception to this rule are the (fundamental) equations is equation~\eqref{app:OmK}, which is based on the local coordinate result \cite[Prop.~2.1]{HR15}. Nevertheless, the formulas of Proposition~\ref{prop:X^j} can be used for the Lie-Poisson case in this section, since \eqref{app:OmK} follows directly from the explicit expression of the symplectic form on $T^*\Aut_0(S)$ (see, e.g., \cite{Michor06}). 
Therefore, and also because this Lie-Poisson example has already been treated in \cite{H19}, where  further details on the setup are given, the algebraic formulas of Theorem~\ref{thm:SD-matching} will be applied without further justification. 
\end{remark}

\subsection{The Hamiltonian system}
Let $[\mu_0^S]: \ao\to\ao^*$ be the isomorphism associated to \eqref{e:mu0}. Equip $T^*\Aut_0^*(S) = \Aut_0(S)\times\ao^* = \Diff_0(Q)\times\F(Q,K)\times\doo^*\times\gu^*$ with the canoncal symplectic form and consider the Hamiltonian
\begin{equation}
    \label{e:YM-Ham0}
    H_0: \Aut_0(S)\times\ao^*\to\R,\quad
    (\om, \Pi)\mapsto\by{1}{2}\vv<\Pi, [\mu_0^S]^{-1}\Pi> .
\end{equation}
The metric $[\mu_0^S]$ is right-invariant with respect to \eqref{e:R_rho} and of Kaluza-Klein form with respect to $([\mu_0^Q],\I_0,\A_0)$. The mechanical connection is given by 
\begin{equation}
    \label{e:YM-A_0}
    \A_0: 
    \Diff_0(Q)\times\F(Q,K)\times\doo\times\gu\to\gu^*,\quad
    (\phi,g,u,X)
    \mapsto
    \Ad(g^{-1})\rho^{\phi}(X+A_0 u).
\end{equation}
This Hamiltonian is right-invariant. Hence the momentum map $J_0: T^*\Aut_0(S)\to\ao^*$ with respect to the cotangent lifted action is constant along solutions of $X_{H_0}$.  
In particular, $J  := \pr_2\circ J_0: T^*\Aut_0(S)\to\gu^*$
is a conserved quantity. This momentum map corresponds to the conservation of charge.

\subsection{The control connection}
The control connection $\Gamma$ is 
\begin{equation}
    \label{e:YM-Gamma}
    \Gamma: 
    \Diff_0(Q)\times\F(Q,K)\times\doo\times\gu\to\gu,\quad
    (\phi,g,u,X)
    \mapsto
    \Ad(g^{-1})\rho^{\phi} X.
\end{equation}
That is $\hor_{\Gamma} = \Aut_0(S)\times\doo\times\set{0}$. 
The maps $j$ and $B$ are defined by the formulas \eqref{e:SD-j} and \eqref{e:SD-B}. Hence the symmetry actuating force $F_0$ is given by \eqref{e:defSymAct} and $J+j$ is a conserved quantity for solutions of $X_{H_0}+F_0$.

\subsection{Matching condition}\label{sec:YM-MCond}
We apply Theorem~\ref{thm:SD-matching}. 
The matching condition \eqref{e:SD-MCond} is satisfied, if 
\begin{equation}
    \label{e:YM-MCond}
    (1 + C A_0^*)^{-1}C [\mu_0^Q] u
    = 
    \gamma \I_0 A_0 u
\end{equation}
for a parameter $\gamma$.
This equation is equivalent to
\begin{equation}
    \label{e:YM-C}
    C 
    = 
    \gamma \Big(
        1 - \gamma \I_0 A_0 [\mu_0^Q]^{-1}[A_0^*]
    \Big)^{-1}
    \I_0 A_0 [\mu_0^Q]^{-1}
\end{equation}
where $\gamma$ is assumed to be sufficiently small so that this expression exists. 
Due to \eqref{e:SD-A_C} this 
implies that the controlled connection must be given by
\begin{equation}
    \label{e:YM-A_C}
    A_C = (1+\gamma) A_0.  
\end{equation}
Therefore, $\Curv^{\mathcal{A}_C} = (1+\gamma)\Curv^{\mathcal{A}_0}$ and choosing \eqref{e:SD-I_C} as
\begin{equation}
    \label{e:YM-I_C}
    \I_C = (1+\gamma)^{-1}\I_0
\end{equation}
yields a force $F_C$, which vanishes because of definition \eqref{e:SD-force}.

These formulas were found in \cite[Section~2.G]{H19}, involving an ad-hoc analysis of the force field and the corresponding closed-loop equations. 

Let $H_C: T^*\Aut_0(S)\to\R$ be the Kaluza-Klein Hamiltonian specified by Theorem~\ref{thm:SD-matching}.
That is,
$[\mu_C^Q] = (1+[A_0^*]C)^{-1}[\mu_0^Q]$ and $\I_C$, $A_C$ are given above.
Then we have shown that 
\begin{equation}
    \label{e:YM-equiv}
    \Phi^*X_{H_C}
    = X_{H_0} + F_0
\end{equation}
where 
$\Phi 
= 
\Psi_C\circ(\id,\var_C,(1+\gamma)^{-1})\circ\Phi_j^{-1}\circ\Psi_0^{-1}$. 

That is, the closed-loop equations associated to the conserved quantity $J+j$, that is due to exerting the force $F_0$, are $\Phi$-related to the Hamiltonian system $X_{H_C}$.

\section{Conclusions and future directions}\label{sec:CON}

\subsection{Setting $B=1$ and non-abelian $G$}\label{sec:CON-B=1}
Instead of \eqref{e:B-j-tilde}, one may also set $B=1$ in Definition~\ref{def:SymCloLoop}. 
While one then loses the simplifying equation~\eqref{e:j-ver-0}, the controlled Noether Theorem~\ref{thm:ConsLaw} remains valid since it is independent of the choice of $B$. Moreover, as mentioned in Remark~\ref{rem:SATrem2}, $B=1$ does work for the satellite example. 

To see why \eqref{e:B-j-tilde} is preferable, consider the direct product $P = M\times G$ of Lie groups, where $G$ is non-abelian. As in Section~\ref{sec:SD}, the strategy is to look for a fiber-linear isomorphism 
\[
 \Phi_C
 =
 (\var_C,S): 
 P\times\mo^*\times\gu^*
 \to 
 P\times\mo^*\times\gu^*
\]
where $\var_C: P\times\mo^*\to P\times\mo^*$, $(\om,\nu)\mapsto(\om,\var_C(\om,\nu))$ is equivariant for the right action and $S: \gu^*\to\gu^*$ is $\Ad^*(G)$-equivariant, such that 
\begin{equation}
    T\var_C^{-1}.
    \Big(
     (X_1^C, X_2^C+F_2^C)(\Phi_C(\om,\nu,q))
    \Big)
    =
    \Big(
     X_1^j, X_2^j+F_2^j
    \Big)
    \Big(\Phi_j(\om,\nu,q)\Big)
\end{equation}
where $(\om,\nu,q) = (\phi,g,\nu,q)\in\W$. Equivariance of $\var_C$ implies that $\var_C(\om,\nu) = \var_C(\nu)$ is independent of $\om$. 
Let us assume that $S$ has been found so that $F_2^C = 0$.
Hence we obtain the necessary matching condition 
\begin{equation}
    \label{e:CON-MC}
    T\var_C^{-1}.
     X_2^C(\Phi_C(\om,\nu,q))
    =
    \Big(
     X_2^j+F_2^j
    \Big)
    \Big(\Phi_j(\om,\nu,q)\Big)
\end{equation}
Now, because of \eqref{app:X2} and linearity of $S$ and $\var_C$, the left hand side of this equation has to depend linearly on $q$.
To evaluate this equation at 
$(\om,\nu,p) = \Phi_j(\om,\nu,q) = (\phi, g ,\nu, q - j\Psi_0^{\textup{hor}}(\om,\nu))$, rewrite \eqref{e:X_2^j} as 
\[
 X_2^j(\om,\nu,p)
 = -\ad(u)^*\nu - L(\om,\nu,p),
\]
where $L$ is defined in \eqref{e:L}, and \eqref{e:F_2^j} as
\begin{align*}
    F_2^j(\om,\nu,p)
    &=
    D_{\om}^0.d(j\circ\Psi_j).X^j(\om,\nu,p) \\
    &=
    A_0^*\Big(
        \ad(X)^*C\nu
        + [\ad(X)^*,C A_0^*(1+CA_0^*)^{-1} ]_{\textup{op}}\Ad(g^{-1})^*p\\
    &\phantom{===}
        + C X_2^j(\om,\nu,p)
        + CA_0^*(1+CA_0^*)^{-1}\Ad(g^{-1})^*X_3^j(\om,\nu,p)
    \Big)
\end{align*}
where $D_{\om}^0 = A_0^*\Ad(g^{-1})^*$, $[.,.]_{\textup{op}}$ denotes the commutator of operators and the equations
\[
 j\Psi_0^{\textup{hor}}(\om,\nu)
 = \Ad(g)^*C\nu 
\]
and 
\[
 j\psi_j^{\textup{ver}}(\om, p)
 =
 \Ad(g)^* C A_0^* (1+C A_0^* )^{-1} \Ad(g^{-1})^* p
\]
are used. Further, $u = (\mu_0^M)^{-1}\nu$ and 
$X 
= -A_0u + \Ad(g)\tilde{\I}^{-1}p
= -A_0 u + \I_0^{-1}(1+CA_0^*)^{-1}\Ad(g^{-1})^*p$,
because of \eqref{e:IC}.  
Now, \eqref{e:X_3^j} equates to
\[
 (1+CA_0^*)^{-1}\Ad(g^{-1})^*X_3^j(\om,\nu,p)
 = 
 [\ad(X)^*,C A_0^*(1+CA_0^*)^{-1} ]_{\textup{op}}\Ad(g^{-1})^*p
\]
and the first term in \eqref{e:L}, which defines $L(\om,\nu,p)$, becomes
\begin{align*}
    \vv<&
     d(j\circ\Psi_j^{\textup{ver}})(\hl_{u_2}^{A_0}(\om), 0), \tilde{\I}^{-1}p 
     > \\
    &=
    -\vv<\ad(A_0 u_2)^* CA_0^*(1+CA_0^*)^{-1}\Ad(g^{-1})^*p ,
     \I_0^{-1}(1+CA_0^*)^{-1}\Ad(g^{-1})^*p> \\
    &\phantom{=}
    +\vv<CA_0^*(1+CA_0^*)^{-1}\ad(A_0 u_2)^*\Ad(g^{-1})^*p , \I_0^{-1}(1+CA_0^*)^{-1}\Ad(g^{-1})^*p> \\
    &= 
    \Big\langle 
     A_0^*\Big(
       \ad\Big( \I_0^{-1}(1+CA_0^*)^{-1}\Ad(g^{-1})^*p \Big)^*
       \Big(CA_0^*(1+CA_0^*)^{-1}\Ad(g^{-1})^*p \Big)\\
    &\phantom{=}
       -
        \ad\Big( (CA_0^*(1+CA_0^*)^{-1})^*\I_0^{-1}(1+CA_0^*)^{-1}\Ad(g^{-1})^*p \Big)^*    
        \Big( CA_0^*(1+CA_0^*)^{-1}\Ad(g^{-1})^*p \Big)
     \Big) 
     , u_2 
    \Big\rangle
\end{align*}
Therefore, collecting all the terms in the right hand side of \eqref{e:CON-MC}, which are quadratic in $p$, and equating these to $0$, yields the condition
\begin{align}
    \label{e:CON-MC2}
    &(1+CA_0^*)A_0^*
    \ad\Big( (CA_0^*(1+CA_0^*)^{-1})^*\I_0^{-1}(1+CA_0^*)^{-1}\Ad(g^{-1})^*p \Big)^*
    \Big( CA_0^*(1+CA_0^*)^{-1}\Ad(g^{-1})^*p \Big)\\
    &=
    A_0^*CA_0^*
    \ad\Big( \I_0^{-1}(1+CA_0^*)^{-1}\Ad(g^{-1})^*p \Big)^*\Big(                    \Ad(g^{-1})^*p \Big).
 \notag
\end{align}
In order for \eqref{e:CON-MC} to be linear in $q$, and therefore in $p$, this equality should hold, whence we obtain a new necessary (but not sufficient) matching condition. 
Since the $\ad(\cdot)^*$-operation is in $\gu^*$, this condition is automatically satisfied when $\gu$ is abelian. 
We have not shown that matching is impossible for non-abelian $G$ under the assumption $B=1$, but with \eqref{e:CON-MC2} there is (at least) one new condition on $C$ that has to be satisfied. On the other hand, Theorem~\ref{thm:SD-matching} shows that there are \emph{no conditions} on $C: \mo^*\to\gu^*$ for the \emph{direct product} Lie group $P = M\times G$, regardless of whether $G$ is commutative or not.  

Consistently with this conclusion and Remark~\ref{rem:SATrem2}, for the satellite example the symmetry group $G$ is the abelian group $S^1$.

\subsection{Comparison with existing literature}\label{sec:CON-COMP}
The method of controlled Lagrangian and Hamiltonian systems has started with \cite{Kri85,BKMS92} and then been further developed in \cite{BLM01a, BLM01b, CBLMW02, BMS97, PB19}. A review of these results is contained in \cite{BL02}. In \cite{CBLMW02} it is shown, for general systems and without explicitly considering symmetries, that the methods of controlled Lagrangian and Hamiltonian systems are equivalent.
The matching result of \cite{BLM01a} is more general than Theorem~\ref{thm:SD-matching}, since it allows the metric $\mu_0^M$ to depend on the base point. On the other hand, \cite{BLM01a, BLM01b, BL02, BKMS92, BMS97, PB19} assume the symmetry group $G$ to be abelian and, when $M$ is a Lie group, the product $M\times G$ is assumed to be a direct product Lie group.

With regard to the existing literature, cited in the previous paragraph, the approach of the present paper differs in the following ways:
\begin{enumerate}
    \item 
    The symmetry group $G$ is allowed to be non-abelian. 
    \item
    The product $P$ may be a semi-direct product with respect to a representation $\rho: M\to\Aut(G)$. 
    \item
    The starting point for the construction is the \emph{symmetry actuating force} defined in \eqref{e:defSymAct} with $B$ given in \eqref{e:B-j-tilde}. This force leads to a conserved quantity (Theorem~\ref{e:ConsLaw}), and Theorem~\ref{thm:MC} provides explicit formulas for the controlled data $(\mu_C^M,\I_C,A_C)$ together with a non-explicit matching condition \eqref{e:thm_MC_2} under completely general assumptions. 
    In the case where $P$ is a (semi-)direct product of Lie groups $M$ and $G$, the matching condition can be made explicit and is given in Theorem~\ref{thm:SD-matching}. We emphasize that the matching condition is automatically satisfied, if the representation $\rho: M \to\Aut(G)$ is trivial.  
    In contrast, the approach of \cite{BLM01a, BLM01b, BL02, BKMS92, CBLMW02, BMS97, PB19} is to start from $(\mu_C^M,\I_C,A_C)$ and to construct the force from the controlled data. 
    While these approaches are, in principle, equivalent, the advantage of starting directly with the force is that the corresponding control law is automatically related to the associated conserved quantity $J+j$. 
    Otherwise, the control law could be related to $J+j$ only up to a factor. While in some cases (such as the satellite example), this factor can be absorbed by multiplying the conserved momentum variable accordingly, for more general group actions (compare Section~\ref{sec:CON-B=1}) such an absorption may not be possible.  
    An example of this relation is given in Remarks~\ref{rem:SATrem2} and \ref{rem:SATrem3}. 
    \item
    The crucial step in the construction of the force $F$ in \eqref{e:defSymAct} is the choice \eqref{e:B-j-tilde}. This choice implies, that $F$ depends on the full $\Pi$-dynamics. Let us, locally, decompose $T^*P = M\times G\times\mo^*\times\gu^*$ and correspondingly $\Pi = (\om,g,\nu,p)$. Since $F$ is $G$-invariant, we thus write, again locally, $F = F(\om,\nu,p)$. The point is that $F$ depends also on observations of the controlled $p$-variable. This is in contrast to \cite[Prop.~3.1]{BLM01a} where the Simplified Matching Assumptions, which are used in the examples, imply that the control force \cite[Equ.~(3.12)]{BLM01a} is independent of the controlled symmetry (position and velocity) variables. Similarly, the control in \cite{BLM01b} is chosen so that the force \cite[Equ.~(24)]{BLM01b} is $p$-independent.
    \item
    All of the calculations in this paper are global and can therefore be adapted to the infinite dimensional case relevant for fluid mechanics. The only exception to this rule is \eqref{app:OmK}, which depends on the local coordinate calculation in \cite[Prop.~2.1]{HR15}. However, for the infinite dimensional case, that is relevant for fluid dynamics, this equation can be shown directly.  See, e.g., \cite[Section~4.1]{Michor06}.
\end{enumerate}

\subsection{Electromagnetic flow control}\label{sec:CON-EMF} 
The Hamiltonian formulation of inviscid flow of charged particles in interaction with electromagnetic or, more generally, Yang-Mills fields is given in \cite{GHK83, MWRSS83, GR08}. 
For the incompressible case the phase space of this system can be schematically described as
\[
 T^*(\Aut_0\times\textup{Conn})
\]
where $\Aut_0$ is the volume preserving automorphism group of a principal fiber bundle and $\textup{Conn}$ is the space of connections. In contrast to Section~\ref{sec:YM} there are now two (commuting) symmetry group actions: the action by right multiplication by $\Aut_0$ remains; additionally there is now the diagonal action due to left multiplication by the gauge group. 

This example still fits the general setup of Section~\ref{sec:match} and one may envisage a construction similar to that of Section~\ref{sec:SD} in order to transform \eqref{e:thm_MC_2} into an explicit matching criterion.

\subsection{Closed-loop control of stochastic Hamiltonian systems}\label{sec:CON-SHC}
The force in the feedback control construction of \eqref{e:Feom1} depends on the Hamiltonian itself, the momentum map $J: T^*P\to\gu^*$ and a $G$-equivariant fiber-linear map $\tilde{\jmath}: \hor^* = J^{-1}(0)\to\gu^*$. Hence a generalization to stochastic Hamiltonian systems could be possible.

More concretely, consider a $G$-invariant Hamiltonian $H_0: T^*P\to\R$ as in  Section~\ref{sec:SYMACT}.
For $k=1,\dots,N$, let $H_k: T^*P\to\R$ be a collection of $G$-invariant Hamiltonians and $W^k$ pair-wise independent Brownian motions. Consider the stochastic Hamiltonian system
\[
 \delta\Pi_t
 = X_{H_0}(\Pi_t)\,\delta t
  + \sum_{k=1}^N X_{H_k}(\Pi_t)\,\delta W_t^k
\]
where $\delta$ is the Stratonovich differential. The symmetry actuating force $F$ depends on the system variable $\Pi$ and is therefore subject to the same uncertainty as the system itself. That is, with equation~\eqref{e:CLU}: 
\[
 F(\Pi_t)
 = - \vl^*\Big(\Pi_t,
        \Big((\tau,J)|\ver_{\Gamma}^*\Big)^{-1}
            (\tau(\Pi_t) , B.dj.\delta\,\Pi_t )
        \Big).
\]
For $k=0,\dots,N$, we thus define 
\[
 \mathcal{U}_k(\Pi)
 =
 \Big((\tau,J)|\ver_{\Gamma}^*\Big)^{-1}
 \Big(
  \tau(\Pi),-B_{\om}dj.X_{H_k}(\Pi)
 \Big)
\]
and, analogously to \eqref{e:CLU},
\[
 F(\Pi)
 = 
 \vl^*\Big(\Pi,
  \mathcal{U}_0(\Pi)\,\delta t
  + \sum_{k=1}^N \mathcal{U}_k(\Pi)\,\delta W_t^k
 \Big)
 =
 \vl^*\Big(\Pi,
  \mathcal{U}_0(\Pi) \Big)
  \,\delta t
  + \sum_{k=1}^N 
   \vl^*\Big(\Pi,\mathcal{U}_k(\Pi)
    \Big)\,\delta W_t^k
\]
which should be viewed as a Stratonovich operator. The stochastic controlled Hamiltonian system is 
\begin{equation}
    \label{e:STOCH-CHS}
    \delta \Pi_t
    =
    X_{H_0}(\Pi_t)\,\delta t
    + \sum_{k=1}^N X_{H_i}(\Pi_t)\,\delta W_t^k
    + F(\Pi_t).
\end{equation}
As with \eqref{e:ConsLaw},
it can be verified that the stochastic controlled Noether Theorem holds: 
\[
 d\Big(J+j\Big).
 \Big( X_{H_0}(\Pi_t)\,\delta t
    + \sum_{k=1}^N X_{H_i}(\Pi_t)\,\delta W_t^k
    + F(\Pi_t)\Big) = 0,
\]
whence $(J+j)(\Pi_t)$ is constant in $t$. 

It is therefore conceivable, that the construction in the above sections would also work for stochastic Hamiltonian systems, as long as the stochastic perturbations respect the level sets of the momentum map $J$ (which is the case if the $H_k$ are $G$-invariant). 
One may consider perturbations in the $M$-variables (e.g.\ satellite or fluid), in the $G$-variables (e.g.\ rotor or external magnetic field), or in both. 

The modern formulation of stochastic geometric mechanics has been initiated in \cite{LCO08} and \eqref{e:STOCH-CHS} is to be understood in the Stratonovich formulation of loc.\ cit.  Examples of finite dimensional stochastic Hamiltonian (and almost-Hamiltonian) systems are in \cite{LCO08,H13,HR15,ACH16} and examples of stochastic Hamiltonian fluid dynamical systems can be found in \cite{Holm15,CFH17,H18,DHL19}. 
The stochastic energy-Casimir method, whose deterministic version has been successfully applied to stabilize closed-loop systems in \cite{BLM01a, BLM01b, BL02, BKMS92, BMS97, H19}, has been developed by \cite{AGH18}.



\section{Appendix}\label{sec:app}
Let $G\hookto P \to M$ be a right principal bundle with a connection form $\A\in\Om^1(P,\gu)$. Suppose $\mu^M$ is a Riemannian metric on $M$ and $\I$ a locked inertia tensor on $\gu$. Let $\mu^P = \mu^{KK}(\mu^M,\I,A)$ be the corresponding Kaluza-Klein metric on $P$ and $H: T^*P\to\R$, $\Pi\mapsto\by{1}{2}\vv<\Pi,(\mu^P)^{-1}\Pi>$ the kinetic energy Hamiltonian. The standard momentum map is $J: T^*P\to\gu^*$. 

Let 
\begin{equation}
    \label{app:W}
    \mathcal{W} 
    := P\times_M T^*M \oplus P\times\gu^*
\end{equation}
and we shall identify $P\times_M T^*M \oplus P\times\gu^* = P\times_M T^*M \times\gu^*$.
Consider the $\A$-dependent isomorphism
\begin{equation}
\label{app:Psi}
    \Psi: 
    \mathcal{W}\to T^*P,\quad
    (\om, \nu, q)
    \mapsto 
    (\hl^*)^{-1}(\om,\nu)
    +  (J|\ver_{\mathcal{A}}^*)_{\om}^{-1}(q)
\end{equation}
where $\hl^*: \hor^* \to P\times_M T^*M$ is the adjoint to the $\A$-horizontal lift, $\om\in P$ and $\nu\in T^*M$ with $\pi(\om) = \tau_{T^*M}(\nu)$. 
See \cite{W78}. 

Denote the canonical Liouville and symplectic forms on $T^*P$ by $\theta$ and $\Om = -d\theta$, respectively. Then the pulled back symplectic form $\Om^{\mathcal{A}} := \Psi^*\Om = -d(\Psi^*\theta)$ can be expressed as
\begin{equation}
    \label{app:OmA}
    \Om^{\mathcal{A}}
    =
    \pi_2^*\Om^{T^*M} 
    - \vv<dJ_{\mathcal{W}} \,\overset{\wedge}{,}\, \tau^*\mathcal{A}>
    - \vv<J_{\mathcal{W}}, \tau^*\textup{Curv}^{\mathcal{A}}>
    + \vv<J_{\mathcal{W}}, \by{1}{2}\tau^*[\mathcal{A},\mathcal{A}]_{\wedge} >
\end{equation}
where $\pi_2: \mathcal{W}\to T^*M$, $\Om^{T^*M}$ is the canonical symplectic form on $T^*M$, $\tau: \mathcal{W}\to P$ is the projection, $\textup{Curv}^{\mathcal{A}} = d\mathcal{A} + \by{1}{2}[\mathcal{A},\mathcal{A}]_{\wedge}$ is the curvature form and $J_{\mathcal{W}}: \mathcal{W}\to\gu^*$, $(\om,\nu,q)\mapsto q$ is the momentum map of the induced $G$-action on $\mathcal{W}$. 
Further, we use the general notation $\vv<A\,\overset{\wedge}{,}\,B> = \vv<A,B> - \vv<B,A>$ and $[\mathcal{A},\mathcal{A}]_{\wedge}(U,V) = 2[\mathcal{A}(U),\mathcal{A}(V)]$. (See \cite[Section~19.2]{Mdgb} for a definition of the graded bracket $[\phantom{x}, \phantom{x}]_{\wedge}$.)

Let $\tau_{TM}: TM \to M$ and $\tau_{T^*M}: T^*M\to M$ be the foot point projections. The corresponding vertical spaces are $\ver(\tau_{TM}) = \ker T\tau_{TM}\subset TTM$ and $\ver(\tau_{T^*M}) = \ker T\tau_{T^*M}\subset TT^*M$. The Riemannian metric $\mu^M$ defines a connection on $TM\to M$ and the corresponding horizontal space will be denoted by $\hor(\mu^M)\subset TTM$. This induces a connection on $T^*M\to M$ with horizontal space $\hor^*(\mu^M) = T\mu^M(\hor(\mu^M))$. Hence, there are splittings
\begin{equation}
    \label{app:hor}
    TTM = \hor(\mu^M)\oplus\ver(\tau_{T^*M})
    \textup{ and }
    TT^*M = \hor^*(\mu^M)\oplus\ver(\tau_{T^*M}).
\end{equation}
The vertical lift maps give rise to the isomorphisms $\textup{vl}: TM\oplus TM \cong \ver(\tau_{TM})$ and $\textup{vl}^*: T^*M\oplus T^*M \cong \ver(\tau_{T^*M})$. This allows to define the connector maps
\begin{align}
    \label{app:K}
    K &:= \pr_2\circ\textup{vl}^{-1}\circ\vpr(\mu^M):
    TTM\to TM\\
    \notag
    K^* &:= \pr_2\circ(\textup{vl}^*)^{-1}\circ\vpr^*(\mu^M):
    TT^*M\to T^*M
\end{align}
where $\pr_2$ are the projections onto the second factor and $\vpr(\mu^M)$, $\vpr^*(\mu^M)$ are the projections onto $\ver(\mu^M)$, $\ver^*(\mu^M)$ respectively. See \cite[Section~22.8]{Mdgb}.

\begin{proposition}
The canonical symplectic form on $T^*M$ can be expressed as
\begin{equation}
    \label{app:OmK}
    \Om^{T^*M}
    = \vv<T\tau_{T^*M}\,\overset{\wedge}{,}\, K^*>.
\end{equation}
\end{proposition}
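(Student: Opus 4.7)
The plan is to verify \eqref{app:OmK} pointwise. Both sides are smooth $2$-forms on $T^*M$, so it suffices to compare their values on arbitrary pairs $X, Y \in T_\nu T^*M$ at each $\nu \in T^*M$. The cleanest route is through local coordinates $(q^i, p_i)$ on $T^*M$, though the three cases coming from the splitting \eqref{app:hor} provide an equivalent coordinate-free argument.

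In coordinates, $T\tau_{T^*M}(X) = \dot q^i \partial_{q^i}$ for $X = \dot q^i \partial_{q^i} + \dot p_i \partial_{p_i}$. The dual connector $K^*$ associated with the Levi-Civita connection of $\mu^M$ takes the form $K^*(X) = (\dot p_i - \Gamma^k_{ij}(q)\, p_k\, \dot q^j)\, dq^i$, as follows from the definitions \eqref{app:hor}--\eqref{app:K} together with the explicit expression $(\textup{hl}^*_{\mu^M})_\nu(u) = u^j \partial_{q^j} + \Gamma^k_{ij}(q)\, p_k\, u^j \partial_{p_i}$ for the horizontal lift at $\nu = p_i\, dq^i$.

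Substituting into $\langle T\tau_{T^*M} \overset{\wedge}{,} K^*\rangle(X, Y)$ for a second vector $Y = \tilde q^i \partial_{q^i} + \tilde p_i \partial_{p_i}$ yields
\[
\dot q^i \tilde p_i - \tilde q^i \dot p_i - p_k\, \bigl(\dot q^i \tilde q^j - \tilde q^i \dot q^j\bigr)\, \Gamma^k_{ij}.
\]
The correction term vanishes because $\Gamma^k_{ij}$ is symmetric in $i, j$ (torsion-freeness of the Levi-Civita connection) while $\dot q^i \tilde q^j - \tilde q^i \dot q^j$ is antisymmetric in $i, j$. What remains is exactly $\Omega^{T^*M}(X, Y) = dq^i \wedge dp_i(X, Y)$, proving the identity.

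The only real obstacle is the bookkeeping of sign conventions, in particular for the dual connector $K^*$ (which relies on compatibility of the covector connection with the vector connection under the canonical pairing) and for the antisymmetrization $\langle A \overset{\wedge}{,} B\rangle(X, Y) = \langle A(X), B(Y)\rangle - \langle A(Y), B(X)\rangle$. If one prefers the coordinate-free route, the vertical--vertical case is immediate (both sides vanish), the mixed case recovers the standard identity $\Omega^{T^*M}(\textup{vl}^*(\nu, \eta), Z) = -\langle \eta, T\tau_{T^*M}(Z)\rangle$, and the horizontal--horizontal case reduces, via the Cartan formula $d\theta^{T^*M}(X, Y) = X(\theta^{T^*M}(Y)) - Y(\theta^{T^*M}(X)) - \theta^{T^*M}([X, Y])$ applied to horizontal lifts of local fields $\bar u, \bar v \in \X(M)$, to $-\langle \nu, T(\bar u, \bar v)\rangle$, which vanishes because the Levi-Civita torsion is zero.
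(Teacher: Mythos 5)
Your proof is correct, but it takes a genuinely different route from the paper. The paper does not compute anything on $T^*M$ directly: it quotes \cite[Prop.~2.1]{HR15}, which gives the analogous identity on the tangent bundle, $(\mu^M)^*\Om^{T^*M} = \vv<T\tau_{TM}\,\overset{\wedge}{,}\,\mu^M\circ K>$, and then transports that identity across the metric isomorphism $\mu^M: TM\to T^*M$ using the naturality relation $K^* = \mu^M\circ K\circ (T\mu^M)^{-1}$. You instead verify the formula directly in Darboux coordinates: with $K^*(X) = (\dot p_i - \Gamma^k_{ij}p_k\dot q^j)\,dq^i$ the Christoffel correction is contracted against the antisymmetric combination $\dot q^i\tilde q^j - \tilde q^i\dot q^j$ and dies by symmetry of $\Gamma^k_{ij}$, leaving $dq^i\wedge dp_i$ (which is indeed $\Om^{T^*M}=-d\theta^{T^*M}$ in the paper's sign convention, so the signs check out). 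Your computation is essentially the content of the cited result of \cite{HR15} redone on the cotangent side, so it buys self-containedness --- the paper's appendix explicitly laments the absence of a direct reference, and your argument supplies exactly the missing elementary proof --- at the cost of a coordinate calculation; the paper's version is shorter given the reference and makes visible why the $TM$- and $T^*M$-side formulas are equivalent. Your coordinate-free sketch via the splitting \eqref{app:hor} (vertical--vertical trivial, mixed case the standard vertical-lift identity, horizontal--horizontal vanishing by torsion-freeness through Cartan's formula) is also sound, since $\theta^{T^*M}$ annihilates the vertical curvature component of the bracket of two horizontal lifts.
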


This equation is the global version of the 
``$\Om = \sum dq^i\wedge dp_i$  formula''. 
I do not know of a reference (other than the related equation in \cite[Prop~2.1]{HR15}), thus a proof is included: 

\begin{proof}
In \cite[Prop.~2.1]{HR15} it is shown that 
\[
 (\mu^M)^*\Om^{T^*M} 
 = \Om^{T^*M}\circ\Lam^2 T\mu^M
 = \vv<T\tau_{TM}\,\overset{\wedge}{,}\, \mu^M\circ K>
\]
whence
\begin{align*}
 \Om^{T^*M}
 &= \vv<T\tau_{TM}\,\overset{\wedge}{,}\, K>\circ\Lam^2(T\mu^M)^{-1}
 = \vv<T\tau_{TM}\circ(T\mu^M)^{-1}\,\overset{\wedge}{,}\, \mu^M\circ K \circ(T\mu^M)^{-1}>\\
 &= \vv<T\tau_{T^*M}\,\overset{\wedge}{,}\, K^*>
\end{align*}
where we use that $K^* =  \mu^M\circ K \circ(T\mu^M)^{-1}$.
\end{proof}

Elements $W\in\mathcal{W}$ are written as $W=(\om,\nu,q)\in P\times T^*M\times \gu^*$ where it is understood that $\pi(\om) = \tau_{T^*M}(\nu)$.
The pulled back Hamiltonian is 
\begin{equation}
    \Psi^*H: \mathcal{W}\to\R,\quad
    (\om,\nu,q)\mapsto \by{1}{2}\vv<\nu,(\mu^M)^{-1}\nu>
        + \by{1}{2}\vv<q,\I_{\om}^{-1}q>.  
\end{equation}
Let $Y_2 = (\dot{\om}_2,\dot{\nu}_2,0) \in T_{(\om,\nu,q)}\mathcal{W}$ such that $K^*(\dot{\nu}_2) = 0$ and $\mathcal{A}(\dot{\om}_2) = 0$. The fibered product requirement yields additionally $T\pi(\dot{\om}_2) = T\tau_{T^*M}(\dot{\nu}_2) = \dot{x} \in T_x M$ where $x = \pi(\om)$. We define the horizontal derivative $d^{\textup{hor}}(\Psi^*H)_{(\om,\nu,q)} \in T_x^*M$ by
\[
 \vv< d^{\textup{hor}}(\Psi^*H)_{(\om,\nu,q)}, \dot{x}_2>
 = 
 \vv<d(\Psi^*H)_{(\om,\nu,q)}, Y_2>.
\]
If $X^{\mathcal{A}} = (\Om^{\mathcal{A}})^{-1}d(\Psi^*H)$ is the Hamiltonian vector field with components $(X^{\mathcal{A}}_1,X^{\mathcal{A}}_2,X^{\mathcal{A}}_3) = X^{\mathcal{A}}(\om,\nu,q)$, then it follows from \eqref{app:OmK} that these are given by
\begin{align}
\label{app:X1}
    X^{\mathcal{A}}_1
    &=
    \textup{hl}^{\mathcal{A}}_{\om}\Big((\mu^M)^{-1}\nu\Big)
    + \zeta_{\mathbb{I}_{\om}^{-1}q}(\om) \\
\label{app:X2}
    X^{\mathcal{A}}_2
    &=
    (\textup{hl}_{\mu^M}^*)_{\nu}\Big((\mu^M)^{-1}\nu\Big)
    +
    \textup{vl}_{\nu}^*\Big(
        - d^{\textup{hor}}(\Psi^*H)_{(\om,\nu,q)}
        - \vv<q, i((\mu^M)^{-1}\nu)\textup{Curv}^{\mathcal{A}}_x>
        \Big)\\
\label{app:X3}
    X^{\mathcal{A}}_3
    &= 0
\end{align}
where $\textup{hl}_{\mu^M}^*: T^*M\oplus TM \to \hor^*(\mu^M)\subset TT^*M$ is the horizontal lift,
that is 
$(\textup{hl}_{\mu^M}^*)_{\nu} = (T_{\nu}\tau_{T^*M}|\hor^*(\mu^M))^{-1}$,
%
%
and $\zeta_X\in\mathcal{P}$ is the fundamental vector field (infinitesimal generator) associated to the $G$-action and $X\in\gu$. Note that $T\pi.X^{\mathcal{A}}_1 = T\tau_{T^*M}.X^{\mathcal{A}}_2 = (\mu^M)^{-1}\nu$.

\end{document}